\newtheorem{definition}{Definition}
\newtheorem{corollary}{Corollary}
\newcommand{\define}[1]{\textbf{#1}}
\newcommand{\lm}{\vec{\mathcal  Q}}
\newcommand{\domlm}{|\vec {\mathcal  Q}|}
\newcommand{\imp}{\mathop \Rightarrow}
\newcommand{\dimp}{\mathop \Leftarrow}
\newcommand{\gtl}{{\sf GTL}}
\newcommand{\lanfull}{{\mathcal L}}
\newcommand{\ignore}[1]{}
\newcommand{\lgt}[1]{\|#1\|}
\newcommand{\nx}{{\ocircle}}
\newcommand{\nec}{\Box}
\newcommand{\ps}{\Diamond}
\newcommand{\val}[1]{\lb #1 \rb}
\newcommand{\moments}[1]{\mathbb M_{ #1 }}
\newcommand{\type}[1]{\mathbb T_{ #1 }}
\def\lb{\left\llbracket}
\def\rb{\right\rrbracket}
\def\<{\left (}
\def\>{\right )}
\def\({\left (}
\def\){\right )}
\def\cbra{\left \{}
\def\cket{\right \}}
\DeclareSymbolFont{AMSb}{U}{msb}{m}{n}
\DeclareMathSymbol{\N}{\mathbin}{AMSb}{"4E}
\DeclareMathSymbol{\Z}{\mathbin}{AMSb}{"5A}
\DeclareMathSymbol{\R}{\mathbin}{AMSb}{"52}
\DeclareMathSymbol{\Q}{\mathbin}{AMSb}{"51}
\DeclareMathSymbol{\I}{\mathbin}{AMSb}{"49}
\newtheorem{thm}{Theorem}[section]
\newtheorem{lem}[thm]{Lemma}
\newtheorem{prop}[thm]{Proposition}
\begin{document}

\title{Time and G\"odel: Fuzzy temporal reasoning in PSPACE}
\author[1,2]{Juan Pablo Aguilera \footnote{\href{mailto:juan.aguilera@UGent.be}{\tt juan.aguilera@UGent.be}}}
\author[3]{Mart\'in Di\'eguez \footnote{\href{mailto:martin.dieguezlodeiro@univ-angers.fr}{\tt martin.dieguezlodeiro@univ-angers.fr}}}
\author[1,4]{David Fern\'andez-Duque \footnote{\href{mailto:david.fernandezduque@ugent.be}{\tt david.fernandezduque@ugent.be}}}
\author[1]{Brett McLean \footnote{\href{mailto:brett.mclean@ugent.be}{\tt brett.mclean@ugent.be}}}

\affil[1]{\centering Department of Mathematics WE16, Ghent University, Ghent, Belgium}
\affil[2]{\centering Institute of Discrete Mathematics and Geometry, Vienna University of Technology, Vienna, Austria}
\affil[3]{\centering LERIA, University of Angers, Angers, France}
\affil[4]{\centering ICS of the Czech Academy of Sciences, Prague, Czech Republic}




\maketitle

\begin{abstract}We investigate a non-classical version of linear temporal logic whose propositional fragment is G\"odel--Dummett logic (which is well known both as a superintuitionistic logic and a t-norm fuzzy logic). We define the logic using two natural semantics, a real-valued semantics and a bi-relational semantics, and show that these indeed define one and the same logic. Although this G\"odel temporal logic does not have any form of the finite model property for these two semantics, we show that every falsifiable formula is falsifiable on a finite quasimodel, which yields decidability of the logic. We then strengthen this result by showing that this G\"odel temporal logic is {\sc pspace}-complete.
\end{abstract}


\section{Introduction}

The importance of temporal logics and, independently, of fuzzy logics in computer science is well established.
The potential usefulness of their combination is clear: for instance, it would provide a natural framework for the specification of programs dealing with vague data.
Sub-classical temporal logics have mostly been studied in the context of here-and-there logic, which allows for three truth values and is the basis for temporal answer set programming~\cite{taspa,taspb,BalbianiDieguezJelia}.

One may, however, be concerned that infinite-valued temporal logics could lead to an explosion in computational complexity, as has been known to happen when combining fuzzy logic with transitive modal logics: these combinations are often undecidable~\cite{Vidal21}, or decidable with only an exponential upper bound being known \cite{BalbianiDF21}.
As we will see, this need not be the case: the combination of G\"odel--Dummett logic with linear temporal logic, which we call G\"odel temporal logic ($\sf GTL$), remains {\sc {\sc pspace}}-complete, the minimal possible complexity given that classical $\sf LTL$ embeds into it.
This is true even when the logic is enriched with the dual implication~\cite{CecyliaRauszer1980}, which has been argued in \cite{BilkovaFK21} to be useful for reasoning with incomplete or inconsistent information.

The decidability of $\sf GTL$ is already surprising, as it does not enjoy the finite model property.
In fact, $\sf GTL$ possesses two natural semantics, corresponding to whether it is viewed as a fuzzy logic or a superintuitionistic logic.
As a fuzzy logic, propositions take values in $[0,1]$, and truth values of compound propositions are defined using standard operations on the real line.
As a superintuitionistic logic, models consist of bi-relational structures equipped with a partial order to interpret implication intuitionistically and a function to interpret the $\sf LTL$ tenses.
Remarkably, the two semantics give rise to the same set of valid formulas, which should provide two different avenues to prove decidability of $\sf GTL$ via the finite model property.
Unfortunately, as we will see, $\sf GTL$ does not enjoy the finite model property for either of these semantics.

Thus we instead introduce quasimodels, which do enjoy their own version of the finite model property.
Quasimodels are not `true' models in that the functionality of the `next' relation is lost, but they give rise to standard bi-relational models by unwinding.
Similar structures were used to prove upper complexity bounds for dynamic topological logic \cite{Fernandez09,dtlaxiom} and intuitionistic temporal logic \cite{F-D18}, but they are particularly effective in the setting of G\"odel temporal logic, as they yield an optimal {\sc {\sc pspace}} upper bound.

\subsubsection*{Structure of paper} In \Cref{SecBasic} we introduce the temporal language that we work with, and then introduce both the real semantics and bi-relational semantics for G\"odel temporal logic. In \Cref{sec:real} we prove  the equivalence of these two semantics, that is, that they yield the same validities. In \Cref{SecGen} we first note that we do not have a finite model property for either of these semantics. But then we define quasimodels, and in later sections we show that our G\"odel temporal logic is sound and complete for the class of finite quasimodels. In \Cref{SecGen} we show that G\"odel temporal logic is sound for \emph{all} quasimodels, constructing a bi-relational model from an arbitrary quasimodel by unwinding selected paths within the quasimodel. In \Cref{Sec:quotient}, given a bi-relational model falsifying a formula, we describe how to produce a finite (exponential in the length of the formula) quasimodel also falsifying the formula. This proves completeness of G\"odel temporal logic for finite quasimodels and the decidability of G\"odel temporal logic.
Finally, in \Cref{Sec:PSPACE} we refine this decidability result, showing that G\"odel temporal logic is in fact {\sc pspace}-complete.

\section{Syntax and Semantics}\label{SecBasic}

In this section we first introduce the temporal language we work with and then two possible semantics for this language: real semantics and bi-relational semantics. 

Fix a countably infinite set $\mathbb P$ of propositional variables. Then the \define{G\"odel temporal language} $\lanfull$ is defined by the grammar (in Backus--Naur form):
\[\varphi,\psi := \ \  \bot \  | \   p  \ |  \ \varphi\wedge\psi \  |  \ \varphi\vee\psi  \ |  \ \varphi\imp \psi  \ |  \  \ \varphi\dimp \psi  \ |  \ \nx\varphi \  | \  \ps\varphi \  |  \ \nec\varphi , \]
where $p\in \mathbb P$. Here, $\nx$ is read as `next', $\ps$ as `eventually', and $\nec$ as `henceforth'. The connective $\dimp$ is coimplication and represents the operator dual to implication \cite{Wolter1998}. 
 We also use $\neg\varphi$ as a shorthand for $\varphi\imp \bot$.


We now introduce the first of our semantics for the G\"odel temporal language: real semantics, which views $\lanfull$ as a fuzzy logic (enriched with temporal modalities). In the definition, $[0,1]$ denotes the real unit interval.

\begin{definition}[real semantics]\label{DefRSem}
A \define{flow} is a pair $\mathcal T = (T,S)$,
where $ T $ is a set and $S \colon T \to T$ is a function.
A \define{real valuation} on $\mathcal T$ is a function $V \colon \lanfull \times T \to \mathbb [0,1]$ such that, for all $t\in T$, the following equalities hold.
\[
\begin{array}{rclrcl}
V(\bot,t)&=& 0 \\
V(\varphi\wedge\psi,t) &=& \min \{ V(\varphi ,t) , V( \psi,t) \}\hspace{1cm} V(\varphi \vee \psi,t) = \max \{ V(\varphi ,t) , V( \psi,t) \}\\
V(\varphi \imp \psi,t) &=&\begin{cases}
1&\text{if }V(\varphi, t)\leq V(\psi, t)\\
V(\psi,t )&\text{if }V(\varphi,t ) > V(\psi,t )\\
\end{cases}
\\
V(\varphi \dimp \psi,t) &=&\begin{cases}
V(\varphi,t )&\text{if }V(\varphi,t ) > V(\psi,t )\\
0&\text{if }V(\varphi, t)\leq V(\psi, t)
\end{cases}
\\
V(\nx\varphi, t)&=&V( \varphi,S(t))\\
V(\ps\varphi,t)&=& \sup_{n<\omega}  V(\varphi,S^n(t))\hspace{1cm}
V(\nec\varphi,t)= \inf_{n<\omega}  V(\varphi,S^n(t))\\
\end{array}
\]
A flow $\mathcal  T$ equipped with a valuation $V$ is a \define{real (G\"odel temporal) model}.
\end{definition}

The second semantics, bi-relational semantics, views $\lanfull$ as an intuitionistic logic (temporally enriched).

\begin{definition}[bi-relational semantics]\label{DefKSem}
A \define{(G\"odel temporal) bi-relational frame} is a quadruple $\mathcal  F=(W,T,{\leq},S)$ where $(W,{\leq})$ is a linearly ordered set and $(T,S)$ is a flow.
A \define{bi-relational valuation} on $\mathcal  F$ is a function $\lb\cdot\rb\colon\lanfull \to 2^{W\times T}$ such that, for each $p \in \mathbb P$, the set $\lb p \rb$ is \emph{downward closed} in its first coordinate, and the following equalities hold.
\[
\begin{array}{rclrcl}
\lb\bot\rb&=&\varnothing \hspace{1cm}
\lb\varphi\wedge\psi\rb =\lb\varphi\rb\cap \lb\psi\rb\hspace{1cm}
\lb\varphi\vee\psi\rb =\lb\varphi\rb\cup \lb\psi\rb\\
\lb\varphi\imp\psi\rb &=& \{ (w,t) \in W\times T \mid \forall v\leq w ((v,t)\in \lb\varphi \rb  
\text{ implies } (v,t)\in \lb\psi \rb ) \}\\
\lb\varphi\dimp\psi\rb &=& \{ (w,t) \in W\times T \mid \exists v\geq w ((v,t)\in \lb\varphi \rb  
\text{ and } (v,t)\notin \lb\psi \rb ) \}\\
\val{\nx\varphi}&=&(\mathrm{id}_W \times S)^{-1} \val\varphi\\
\val{\ps\varphi}&= &\bigcup_{n<\omega}(\mathrm{id}_W \times S)^{-n} \val\varphi\hspace{1cm}
\val{\nec\varphi}= \bigcap_{n<\omega}(\mathrm{id}_W \times S)^{-n}\val\varphi  \\
\end{array}
\]
A bi-relational frame $\mathcal  F$ equipped with a valuation $\lb\cdot\rb$ is a \define{(G\"odel temporal) bi-relational model}.
\end{definition}

This semantics combines standard semantics for the implications based on $\leq$ (read downward) and for the tenses based on $S$: for example, $(w,t)\in \val{\ps\varphi}$ if and only if there exists $n\geq 0$ such that $(w,S^n(t)) \in \val\varphi$.
Note that, by structural induction, the valuation of \emph{any} $\varphi \in \lanfull$ is downward closed in its first coordinate, in the sense that if $(w,t)\in \val\varphi$ and $v\leq w$, then $(v,t)\in \val \varphi$.

The attentive reader may notice that (with respect to either semantics) $\bot$ is expressible as $\varphi \dimp \varphi$, using any $\varphi$. We choose to include $\bot$ as a primitive symbol for the benefit of those who are interested in the $\dimp$-free language, i.e. the language whose propositional fragment is the familiar language of intuitionistic logic.

Validity of $\lanfull$-formulas is defined in the usual way.

\begin{definition}[validity]
Given a real model $\mathcal  X = (T, S, V)$ and a formula $\varphi\in \lanfull$, we say that $\varphi$ is \define{globally true} on $\mathcal  X$, written $\mathcal  X\models\varphi$, if for all $t \in T$ we have $V(\varphi, t) = 1$.
Given a bi-relational model $\mathcal  X = (\mathcal F, \lb\cdot\rb)$ and a formula $\varphi\in \lanfull$, we say that $\varphi$ is \define{globally true} on $\mathcal  X$, written $\mathcal  X\models\varphi$, if $\val\varphi =W \times T$. 

If $\mathcal  X$ is a flow or a bi-relational frame, we write $\mathcal  X\models\varphi$  and say $\varphi$ is \define{valid} on $\mathcal X$, if $\varphi$ is globally true for every valuation on $\mathcal  X$. If $\Omega$ is a class of flows, frames, or models, we say that $\varphi\in\lanfull$ is \define{valid} on $\Omega$ if, for every $\mathcal  X\in \Omega$, we have $\mathcal  X\models\varphi$. If $\varphi$ is not valid on $\Omega$, it is \define{falsifiable} on $\Omega$.
\end{definition} 

We define the logic $\gtl_{\mathbb R}$ to be the set of  $\lanfull$-formulas that are valid over the class of all flows and the logic $\gtl$ to be the set of $\lanfull$-formulas that are valid over the class of all bi-relational frames.

\section{Real versus~bi-relational validity}\label{sec:real}

In this section, we show that an arbitrary $\lanfull$-formula is real valid if and only if it is bi-relationally valid. That is, $\gtl_{\mathbb R} = \gtl$. This equivalence will be immediate from Lemma \ref{LemmaRealToKripke} and Lemma \ref{LemmaKripkeToReal} below.

\begin{lem}\label{LemmaRealToKripke}
Suppose that $\varphi$ is an $\lanfull$-formula that is not real valid. Then $\varphi$ is not bi-relationally valid.
\end{lem}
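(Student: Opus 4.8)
The plan is to convert a falsifying real model into a falsifying bi-relational model by threading the real-valued truth degrees through a linear order. Suppose $\varphi$ is not real valid, so there is a real model $\mathcal X = (T,S,V)$ and a point $t_0 \in T$ with $V(\varphi, t_0) < 1$. The real semantics is already ``temporally'' structured by the flow $(T,S)$, so I would keep the same flow in the bi-relational frame and supply the missing linear order $(W,\leq)$ from the truth values themselves. A natural first choice is to take $W$ to be the set of truth values actually assumed, or more simply the whole interval $[0,1]$ (or $(0,1]$), ordered in a way that makes implication come out right. Since intuitionistic implication reads \emph{downward} and G\"odel implication is witnessed by the value of the consequent, the correct convention is to reverse the usual order: set $W = [0,1]$ ordered by $\geq$ (so ``smaller truth value'' means ``higher in the $\leq$ order''), and declare $(w,t) \in \val{p}$ iff $w \leq V(p,t)$ in the real order, i.e.\ the world $w$ ``sees'' $p$ exactly when the truth degree is at least $w$. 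Downward closure in the first coordinate (with respect to the bi-relational $\leq$) then corresponds to the upward-closed cut $\{w : w \le V(p,t)\}$ being closed under \emph{decreasing} $w$, which is automatic.

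The heart of the argument is the \textbf{truth lemma}: for every subformula $\psi$ of $\varphi$ and every $t \in T$,
\[
(w,t) \in \val{\psi} \iff w \le V(\psi,t),
\]
where $\le$ is interpreted in the real order on $[0,1]$ and the bi-relational order on $W$ is its reverse. I would prove this by structural induction on $\psi$. The Boolean cases $\wedge,\vee,\bot$ are immediate since $\min,\max$ commute with the cut construction. The implication case is the crux: I must check that the down-set-of-worlds semantics for $\imp$ reproduces the G\"odel truth function, using the chosen order direction; concretely, $(w,t) \in \val{\varphi \imp \psi}$ should hold iff $w \le V(\varphi \imp \psi, t)$, and the universal quantifier over $v \le w$ (i.e.\ $v \ge w$ in the real order) should recover exactly the two-case definition of $V(\varphi\imp\psi,t)$. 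The dual case for $\dimp$ is handled symmetrically with the existential quantifier. For the temporal connectives $\nx,\ps,\nec$, the key point is that the real $\sup$ and $\inf$ over $\{V(\psi, S^n(t))\}$ correspond, under the cut translation, to union and intersection of the sets $\{w : w \le V(\psi, S^n(t))\}$; here I must be slightly careful because $\sup$ need not be attained, so I would want $W$ to be chosen so that the cut $\{w : w \le \sup_n a_n\}$ equals $\bigcup_n \{w : w \le a_n\}$, which holds precisely when the order on $W$ has no ``gap'' at the supremum -- taking $W$ to be the full interval with the open/closed endpoint conventions arranged appropriately makes this work.

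The main obstacle I anticipate is exactly this interaction between suprema/infima and the cut construction in the temporal clauses. With strict inequalities the identity $\{w : w < \sup_n a_n\} = \bigcup_n \{w : w < a_n\}$ holds, whereas with non-strict inequalities the supremum point itself can fail to lie in the union when the $\sup$ is not attained; so I would need to fix once and for all whether membership in $\val{p}$ uses $<$ or $\le$ and then verify that the \emph{same} choice makes both the $\ps$ (sup/union) and $\nec$ (inf/intersection) clauses go through, noting that $\nec$ via $\inf$ and intersection is the comparatively benign direction. A secondary technical point is ensuring $(W,\le)$ is genuinely linearly ordered and that the resulting $\val{\cdot}$ is downward closed in the first coordinate for all formulas, but this follows from the truth lemma together with monotonicity of all the real operations. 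Once the truth lemma is established, the conclusion is immediate: since $V(\varphi, t_0) < 1$, pick any world $w$ with $w \not\le V(\varphi,t_0)$ (for instance $w = 1$ in the real order if values are compared via $\le$, so that $w$ sits at the bottom of the bi-relational order), and then $(w, t_0) \notin \val{\varphi}$, witnessing that $\varphi$ is not bi-relationally valid.
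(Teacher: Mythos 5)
Your overall strategy---turning truth degrees into a linear order of threshold worlds and proving a truth lemma $(w,t)\in\val\psi \iff$ ``$w$ is below $V(\psi,t)$'' by induction---is the same as the paper's, but two things go wrong in the execution. The first is the order reversal: the paper's bi-relational semantics already reads implication \emph{downward} (it quantifies over $v\leq w$) and requires valuations to be \emph{downward} closed, so the thresholds should carry the usual real order, not its reverse. With your convention ($\leq_W$ the reverse of the real order, and $(w,t)\in\val p$ iff $w\leq V(p,t)$ in the real order), $\val p$ is \emph{not} downward closed in $\leq_W$, and the implication clause breaks: when $V(\varphi,t)>V(\psi,t)$, your reading makes $(w,t)\in\val{\varphi\imp\psi}$ hold exactly when $w>V(\varphi,t)$ in the real order, rather than exactly when $w$ is below $V(\psi,t)$. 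This part is easily repaired by simply not reversing.

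The serious gap is the one you flag and then wave away: non-attained suprema and infima. Whichever fixed choice of strict versus non-strict membership you make, one of the two temporal clauses fails at worlds sitting exactly at a non-attained $\sup$ (for $\ps$, with non-strict $\leq$) or non-attained $\inf$ (for $\nec$, with strict $<$), and these problem points are \emph{interior} points of the interval: e.g.\ $V(p,S^n(t)) = \tfrac12 - \tfrac1{n+3}$ has non-attained supremum $\tfrac12$. Hence no arrangement of ``open/closed endpoint conventions'' on the full interval can help---the full interval always contains all such points. The missing idea, which is the heart of the paper's proof, is a countability argument: the set $X_0$ of all reals of the form $V(\psi,t)$ (over all formulas $\psi$ and all $t$ in a countable flow) is countable, so one takes the set of worlds to be $X = (0,1)\setminus X_0$, still dense and linearly ordered, with strict membership $(x,t)\in\lb p\rb$ iff $V(p,t)>x$. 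Since no world can equal any value $V(\ps\psi,t)=\sup_n V(\psi,S^n(t))$ or $V(\nec\psi,t)=\inf_n V(\psi,S^n(t))$ (these all lie in $X_0$), the inequality ``$\forall n\ V(\psi,S^n(t))>x$'' can be upgraded from ``$\inf_n V(\psi,S^n(t))\geq x$'' to the strict inequality needed, and both temporal clauses go through; density of $X$ is what rescues the $\imp$ and $\dimp$ clauses. Without this (or some equivalent) device, your truth lemma is false as stated, so the proof does not go through.
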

\begin{proof}
Let $(T, S)$ be a flow, $V$ a real valuation on $(T,S)$, and $t_0 \in T$ be such that $V(\varphi, t_0) < 1$. Since we are only concerned with the valuation at $t_0$, we may assume without loss of generality that $T = \mathbb{N}$, $t_0 = 0$, and $S$ is the successor function; in particular, that $T$ is countable. 
Let $X_0$ be the set of all real numbers $x$ such that $V(\psi, t) = x$ for some $\lanfull$-formula $\psi$ and some $t\in T$. Thus, $X_0$ is a countable subset of $[0,1]$. Let $X = (0,1) \setminus X_0$. 

We consider the bi-relational frame $\mathcal{F} = (X, T, \leq, S)$, where $\leq$ is the usual order on real numbers, and the bi-relational valuation $\lb \cdot \rb$ given by 
\begin{equation} \label{eqInductionRealToKripke}
(x, t) \in \lb  p \rb \leftrightarrow V(p, t)> x.
\end{equation}
We prove by induction that \eqref{eqInductionRealToKripke} holds for arbitrary $\lanfull$-formulas $\psi$ and arbitrary $x \in X$ and $t \in T$. If so, then letting $x\in X$ be such that $V(\varphi, t_0)< x$ (this exists because $X_0$ is countable), we have $(x, t_0) \not\in \lb \varphi\rb$, so that $(\mathcal{F}, \lb \cdot\rb)$ is a bi-relational countermodel for $\varphi$, as needed.

Let us just verify the cases of $\Box\psi$ and $\chi \dimp \psi$ of the induction. For the first one, suppose $x \in X$ and $t\in T$. Then
\begin{align*}
(x,t) \in \lb \Box\psi \rb 
\leftrightarrow \forall n\in\mathbb{N}\, \big((x,S^n(t)) \in \lb \psi \rb \big)
&\leftrightarrow \forall n\in\mathbb{N}\, \big( V(\psi, S^n(t)) > x \big)\\
&\leftrightarrow \inf_{n\in\mathbb{N}} \{V(\psi, S^n(t)) \} > x\\
&\leftrightarrow V(\Box\psi, t) > x.
\end{align*}
Here, the first and last equivalences follow immediately from the definitions, the second equivalence uses the induction hypothesis, and the left-to-right direction of the third equivalence uses the fact that 
$V(\Box\psi, t) = \inf_{n\in\mathbb{N}} \{V(\psi, S^n(t)) \}$
is not an element of $X$ and thus cannot be equal to $x$.

For the case of $\chi\dimp \psi$, again suppose $x \in X$ and $t\in T$. We have 
\begin{align*}
(x,t) \in \lb \chi\dimp \psi\rb 
&\leftrightarrow \exists y\geq x\, \big( (y,t) \in \lb \chi\rb \wedge (y,t) \not\in \lb\psi\rb\big)\\
&\leftrightarrow \exists y\geq x\, \big( V(\chi, t) > y \wedge V(\psi, t) \leq y \big)\\
&\leftrightarrow V(\chi, t) > V(\psi, t) \wedge V(\chi, t) > x\\
&\leftrightarrow V(\chi\dimp \psi, t) > x,
\end{align*}
as desired. Here, the first and last equivalences are immediate from the definitions, the second one follows from the induction hypothesis, and the third one follows from usual properties of real numbers.
\end{proof}

\begin{lem}\label{LemmaKripkeToReal}
Suppose that $\varphi$ is an $\lanfull$-formula that is not bi-relationally valid. Then $\varphi$ is not real valid.
\end{lem}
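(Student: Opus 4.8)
The plan is to reverse the construction of \Cref{LemmaRealToKripke}: starting from a bi-relational model with valuation $\lb\cdot\rb$ and a pair $(w_0,t_0)$ with $(w_0,t_0)\notin\lb\varphi\rb$, I will embed the chain of worlds $W$ into the interval $(0,1)$ and read off a real valuation on the \emph{same} flow. Exactly as before, only the forward $S$-orbit of $t_0$ matters, so I may assume $T=\mathbb N$, $t_0=0$, and $S$ the successor function; in particular $T$ is countable. Given a suitable order-embedding $e\colon W\to(0,1)$, I intend to define
\begin{equation*}
V(\psi,t)=\sup\{\,e(w): (w,t)\in\lb\psi\rb\,\}\qquad(\sup\varnothing:=0),
\end{equation*}
and to prove, by induction on $\psi$, the correspondence
\begin{equation*}
(w,t)\in\lb\psi\rb \iff e(w)<V(\psi,t). \tag{$\star$}
\end{equation*}
Granting $(\star)$ for $\psi=\varphi$, from $(w_0,0)\notin\lb\varphi\rb$ I obtain $V(\varphi,0)\le e(w_0)<1$, so $\varphi$ is not real valid. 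Two features of $W$ can spoil the induction, and arranging $e$ to neutralise them is where the real work lies.

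The first obstacle is that an arbitrary linear order need not embed into $\mathbb R$, so I first pass to a tractable chain. The value of any subformula at a world depends only on that world's \emph{profile}, namely which of the countably many subformula-columns $\{w:(w,t)\in\lb\chi\rb\}$ (with $\chi$ a subformula of $\varphi$ and $t\in T$) contain it. Quotienting $W$ by equality of profiles yields a linear order, and since every column is downward closed, the profile map is order-reversing into the inclusion-chain of subsets of a countable set; any such chain embeds into $\mathbb R$ (for instance via $A\mapsto\sum_{i\in A}2^{-i}$). A routine check shows the semantic clauses descend to the quotient and falsification is preserved, so I may assume $W$ is a countable chain.

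The second obstacle, which I expect to be the crux, is that the clauses for $\ps$ and $\nec$ require suprema and infima over the orbit to agree with unions and intersections of columns; concretely, $V(\nec\psi,t)=\inf_n V(\psi,S^n(t))$ must equal $\sup e\big(\bigcap_n\{w:(w,S^n(t))\in\lb\psi\rb\}\big)$, and this can fail precisely when $e(W)$ leaves a gap straddling the relevant cut. I will remove all such gaps by a blow-up: replace $W$ by the lexicographic product $W\times\mathbb Q$, with valuations ignoring the second coordinate. This keeps the model and the falsification intact, turns every column into a down-set with no maximum whose complement has no minimum, and makes $W\times\mathbb Q$ a countable dense linear order without endpoints, hence isomorphic to $\mathbb Q$; embedding it onto $\mathbb Q\cap(0,1)$ gives $e$ with dense image. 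Density forces $\sup e(D)=\inf e(W\setminus D)$ for every column $D$, and then the key point is that if $\sup e(\bigcap_n D_n)<\inf_n\sup e(D_n)$ there would be a rational $\rho=e(w)$ strictly between them, whence $w<w_n$ for some $w_n\in D_n$ for every $n$, forcing $w\in\bigcap_n D_n$ and a contradiction; so the infimum matches.

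With these reductions the induction is routine. The cases $\bot,\wedge,\vee$ are immediate, as is $\ps$ (supremum of a union is the supremum of the suprema); $\nec$ uses the no-gap property just described. The cases $\imp$ and $\dimp$ exploit that in a chain the columns of $\chi$ and $\psi$ are nested: one checks that $\lb\chi\imp\psi\rb$ is all of $W\times\{t\}$ when $\{w:(w,t)\in\lb\chi\rb\}\subseteq\{w:(w,t)\in\lb\psi\rb\}$ and equals the column of $\psi$ otherwise, mirroring exactly the two cases in \Cref{DefRSem}, with the dual statement for $\dimp$. I expect the gap-elimination step of the previous paragraph to be the principal difficulty, as it is the only place where the order-completeness of $\mathbb R$ must be reconciled with the infinitary temporal operators.
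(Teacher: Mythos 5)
Your route is genuinely different from the paper's and, modulo one gap, it works. The paper never embeds the worlds into the reals: it quotients \emph{formula--time pairs} by extensional comparison of their truth sets, obtaining a countable linear order of truth values in which all the required suprema and infima already exist (they are witnessed by the classes of $\ps\psi$ and $\nec\psi$ themselves), and then applies a sup- and inf-preserving (``continuous'') embedding of that countable order into $[0,1]$, citing \cite{BPZ07}. You instead embed the \emph{worlds} densely into $(0,1)$ and let density of the image rule out gaps; your nested-column analysis of $\imp$ and $\dimp$, and your no-gap argument for $\nec$ (a rational $e(w)$ strictly between $\sup e(\bigcap_n D_n)$ and $\inf_n\sup e(D_n)$ would lie below some element of each $D_n$, hence in every $D_n$), are correct, and the $\mathbb{Q}$-blow-up does exactly what you need to columns' maxima and their complements' minima. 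The two proofs trade the same difficulty in different places: the paper needs the existence of a continuous embedding of a countable order; you need a dense embedding, which is elementary once the order is countable.

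The gap is precisely the sentence ``so I may assume $W$ is a countable chain.'' The profile quotient does not give countability: profiles form a chain of subsets of a countable index set, and such chains can have cardinality continuum. Concretely, let $W=\mathbb{R}$, $T=\mathbb{N}$, and $\lb p\rb=\{(w,t)\mid w<q_t\}$ where $(q_t)_{t\in\mathbb{N}}$ enumerates $\mathbb{Q}$: every column is downward closed, but any two distinct reals are separated by some rational, so all profiles are distinct and the quotient is uncountable. This matters downstream, because countability is exactly what your blow-up consumes: for uncountable $W$, the lexicographic product $W\times\mathbb{Q}$ is not isomorphic to $\mathbb{Q}$, and in fact does not order-embed into $\mathbb{R}$ at all---an embedding would produce uncountably many pairwise disjoint nonempty open real intervals, one inside the convex hull of the image of each fibre $\{x\}\times\mathbb{Q}$. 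So the construction itself, not merely its justification, breaks at this point. The repair is the standard one the paper relegates to a footnote: a L\"owenheim--Skolem-style closure producing a countable $W^*\subseteq W$ that contains $w_0$ and, for each $w^*\in W^*$, each $t\in T$, and each subformula $\chi\imp\psi$ or $\chi\dimp\psi$, a witness to the relevant existential clause, chosen so that truth values of all subformulas of $\varphi$ are preserved. With $W$ replaced by such a $W^*$ (which also renders the profile quotient unnecessary), the rest of your proof goes through.
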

\begin{proof}
Suppose that there is a bi-relational frame $\mathcal{F} = (W, T, \allowbreak\leq, S)$ and a valuation $\lb \cdot \rb$ such that $(w,t) \not\in \lb\varphi\rb$. Since we are only concerned with the valuation of $\varphi$ in $(w,t)$, we may as well assume that $T = \mathbb{N}$, that $S$ is the successor function, and that $t = 0$.
By a routine L\"owenheim--Skolem-type argument, we may assume that $W$ is countable.\footnote{Build a suborder $(W^*,{\leq}^*)$ of $(W,{\leq})$ by induction on the structure of $\varphi$ as follows: start with $\{w\}$ and inductively decompose $\varphi$ according to its outermost connective. When considering subformulas $\psi$ of the form $\varphi_0 \imp \varphi_1$ or $\varphi_0 \dimp \varphi_1$, we add to $W^*$ (if necessary), for each world $w^*$ in the suborder being built and for each $s \in T$, a new element of $W$  witnessing the quantifier in the definition of $\lb\psi\rb$ in a way that ensures that ``$(w^*, s) \in \lb\psi\rb$'' holds in $(W,T,{\leq}, S)$ if and only if it holds in $(W^*, T, \leq^*, S)$.}

We define a binary relation on $\lanfull \times T$ by 
\[(\psi, t_1) \leq (\chi, t_2) \leftrightarrow \forall w\in W\, (w, t_1) \in \lb \psi \rb \to (w, t_2) \in \lb \chi \rb,\]
i.e., if the valuation of $\psi$ in $t_1$ is contained in that of $\chi$ in $t_2$.
Since valuations of formulas are downward closed in their first coordinates, this reflexive transitive relation is total (any two elements are comparable). Let $(L, \leq)$ denote the linear order of all equivalence classes $[\psi, s]$ under $\leq$, and let $0$ and $1$ denote the equivalence classes of $[\bot, s]$ and $[\psi \imp \psi, s]$, respectively (this is independent of the choice of $s$ and $\psi$).

We claim that this linear order respects valuations, in the sense that it satisfies the following properties for each $s \in T$:
\begin{itemize}
\item $0$ is the least element and $1$ is the greatest element of $L$;
\item $[\psi\wedge\chi, s] = \min\{[\psi, s], [\chi, s]\}$
and $[\psi\vee\chi, s] = \max\{[\psi, s], [\chi, s]\}$;
\item $[\chi \imp \psi, s] = 1$ if $[\chi, s] \leq [\psi, s]$, and $[\chi \imp \psi, s] = [\psi, s]$ otherwise;
\item $[\chi \dimp \psi, s] = 0$ if $[\chi, s] \leq [\psi, s]$, and $[\chi \dimp \psi, s] = [\chi, s]$ otherwise;
\item $[\nx \psi, s] = [\psi, s+1]$;
\item $[\ps \psi, s] = \sup\{ [\psi, s+n] \mid n\in\mathbb{N}\}$ and
 $[\Box \psi, s] = \inf\{ [\psi, s+n] \mid n\in\mathbb{N}\}$.
\end{itemize}
These properties easily follow from the definitions. Again, let us verify the claims for $\Box$ and $\dimp$. 

For $\Box$, we first note that for any $n\in\mathbb{N}$ and any $w\in W$ such that $(w,s) \in \lb \Box\psi \rb$, we clearly must have $(w,s+n) \in \lb \psi\rb$, so that $[\Box\psi, s] \leq [\psi, s+n]$. Suppose now that $\chi$ and $s'$ are such that 
$\forall n\, \big([\chi, s'] \leq [\psi, s+n]\big).$
Then for any $w\in W$, if $(w, s') \in \lb\chi\rb$, then $(w, s+n) \in \lb \psi\rb$ for all $n$ and thus $(w, s) \in \lb \Box\psi\rb$ by definition, so that $[\chi, s'] \leq [\Box\psi, s]$ holds as needed.

For $\dimp$, there are two cases. Suppose first that $[\chi, s] \leq [\psi, s]$, so that whenever $(w,s) \in \lb\chi\rb$, we also have $(w,s) \in \lb\psi\rb$.
Then for any $w \in W$, we have 
\begin{align*}
(w, s) \in \lb \chi \dimp \psi \rb
&\leftrightarrow 
\exists v \geq w\, ((v, s) \in \lb\chi \rb \wedge (v,s)\not\in \lb\psi \rb ),
\end{align*}
which can never occur, so $[\chi\dimp \psi, s] = 0$.

Suppose, on the other hand, that $[\chi, s] > [\psi, s]$. Then there is $u \in W$ such that $(u, s) \in \lb\chi\rb$ but $(u,s) \not \in\lb\psi\rb$. 
Now, 
\begin{align*}
(w, s) \in \lb \chi \dimp \psi \rb
&\leftrightarrow 
\exists v \geq w\, ((v, s) \in \lb\chi \rb \wedge (v,s)\not\in \lb\psi \rb )\\
&\leftrightarrow (w,s) \in \lb\chi\rb,
\end{align*}
so that $[\chi\dimp\psi, s] = [\chi, s]$, as desired.

Now, $(L, \leq)$ is a countable linear order with endpoints, so it can be continuously embedded into the interval $[0,1]$ in such a way that the images of $0$ and $1$ are, respectively, $0$ and $1$. 
Let $\rho$ be such an embedding (we refer the reader to e.g., the proof of \cite[Theorem 5.1]{BPZ07} for an explicit construction of such a $\rho$.). We define a real valuation $V$ by setting $V(\psi, s) = \rho([\psi, s])$. By the properties above, $V$ is indeed a real valuation and $V(\varphi, t) = \rho([\varphi,t]) < 1$.
\end{proof}

\section{Labelled systems and quasimodels}\label{SecNDQ}

Our decidability proof for the G\"odel temporal validities $\gtl$ is based on (nondeterministic) quasimodels, originally introduced in \cite{Fernandez09} for \emph{dynamic topological logic,} a classical predecessor of \emph{intuitionistic temporal logic,} for which quasimodels were also used in \cite{F-D18}.
As the bi-relational semantics makes evident, G\"odel temporal logic is closely related to intuitionistic temporal logic. In this section we will introduce labelled spaces, labelled systems, and finally, quasimodels. Quasimodels can be viewed as a sort of nondeterministic generalisation of bi-relational models.

Of course, many decidability proofs for classical modal logics are obtained via the finite model property, so it is worthwhile to first note that this strategy cannot work for $\gtl$ because finite model properties do not hold. The finite model properties we define are of the form: \emph{falsifiable} implies \emph{falsifiable in a finite model}.
It is worth remarking that in sub-classical logics, it is indeed the notion of falsifiability that is relevant, as it is falsifiability that is dual to validity.
However, in view of our inclusion of coimplication, we may define ${\sim}\varphi\equiv \top\dimp\varphi$, and then it is not hard to check that ${\sim}\varphi$ is \emph{satisfiable} (in the sense of having a non-zero truth value) if and only if $\varphi$ is falsifiable.
Thus in view of the fact that our logic is (as we will see) {\sc pspace}-complete, validity, satisfiability, and falsifiability are all inter-reducible.

\begin{definition}\label{def:fmp}
The \define{strong finite model property} for $\gtl$ is the statement that if $\varphi \in \lanfull$ is falsifiable on a bi-relational model, then it is falsifiable on a bi-relational model $\mathcal  F=(W,T,{\leq},S, \lb\cdot\rb)$ where both $W$ and $T$ are finite.

The \define{order finite model property} for $\gtl$ is the statement that if $\varphi \in \lanfull$ is falsifiable on a bi-relational model, then it is falsifiable on a bi-relational model $\mathcal  F=(W,T,{\leq},S, \lb\cdot\rb)$ where $W$ is finite.

The \define{temporal finite model property} for $\gtl$ is the statement that if $\varphi \in \lanfull$ is falsifiable on a bi-relational model, then it is falsifiable on a bi-relational model $\mathcal  F=(W,T,{\leq},S, \lb\cdot\rb)$ where $T$ is finite.
\end{definition}

\begin{prop}\label{no_finite}
None of the finite model properties for $\gtl$ listed in \Cref{def:fmp} hold.
\end{prop}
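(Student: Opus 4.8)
The plan is to observe first that the strong finite model property implies both the order and the temporal finite model properties, so it suffices to refute the latter two. The common strategy is to exploit two structural limitations of the restricted models. If $T$ is finite, then for every $t$ the orbit $\{S^n(t) : n < \omega\}$ is finite, so every $\inf$ and $\sup$ appearing in \Cref{DefRSem} (equivalently, every intersection and union in \Cref{DefKSem}) ranges over a finite set and is therefore \emph{attained}. If $W$ is finite, then for each fixed time the valuation of any formula is a downward-closed subset of a finite chain, so formulas take only finitely many ``truth degrees'', and no truth degree can strictly increase along $S$ infinitely often. Each restriction will be violated by a single formula.

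For the temporal finite model property I would use $\varphi_T := \nec\neg\neg p \imp \neg\neg\nec p$. On the infinite flow $(\mathbb N, \mathord{+}1)$ with $V(p, n) = 1/(n+2)$ we have $V(\nec\neg\neg p, 0) = 1$ (each value is positive, so each $\neg\neg p$ is $1$) while $V(\nec p, 0) = \inf_n 1/(n+2) = 0$, hence $V(\neg\neg\nec p, 0) = 0$ and $V(\varphi_T, 0) = 0$; by \Cref{LemmaRealToKripke} this yields a bi-relational countermodel, so $\varphi_T$ is falsifiable. Conversely, on a model with $T$ finite I would argue directly in the bi-relational semantics that $\varphi_T$ is globally true: assuming $(w,t) \in \val{\nec\neg\neg p}$, for each $v \leq w$ and each $s$ in the (finite) orbit of $t$ there is some $u_s \leq v$ with $(u_s, s) \in \val p$; taking $u' = \min_s u_s$ (a finite minimum, available because $W$ is linearly ordered) and using downward closure gives a single $u' \leq v$ lying in $\val p$ at \emph{every} time of the orbit, i.e.\ $(u', t) \in \val{\nec p}$, whence $(w,t) \in \val{\neg\neg\nec p}$. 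Thus $\varphi_T$ is valid on every model with $T$ finite, refuting the temporal (and hence the strong) finite model property.

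For the order finite model property I would use coimplication to read off strict monotonicity, setting $\varphi_W := \neg \nec\, \ineg(\nx p \imp p)$, where $\ineg\alpha \equiv \top \dimp \alpha$ is the crisp formula that is fully true exactly when $\alpha$ is not. Since $\ineg(\nx p \imp p)$ is fully true at $s$ precisely when $V(p, S(s)) > V(p, s)$, the formula $\nec\,\ineg(\nx p \imp p)$ is positive at $t$ exactly when the values $V(p, S^n(t))$ are \emph{strictly increasing} in $n$, and $\varphi_W$ is falsifiable exactly when such a perpetually strictly increasing orbit exists. This happens on $(\mathbb N, \mathord{+}1)$ with $V(p, n) = 1 - 2^{-(n+1)}$, giving (again via \Cref{LemmaRealToKripke}) a bi-relational countermodel. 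When $W$ is finite, however, the valuation of $p$ at each time is one of only finitely many cuts of the chain $W$, so a strictly increasing chain of such cuts has bounded length and the increase must eventually fail; hence $\nec\,\ineg(\nx p \imp p)$ is nowhere positive and $\varphi_W$ is globally true. This refutes the order (and again the strong) finite model property, completing the three refutations.

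I expect the main obstacle to be the careful bookkeeping in the bi-relational semantics along the \emph{infinite} coordinate in each case: verifying that $\neg\neg$ and $\ineg$ behave as the ``crisp'' operators described above when $W$ is an arbitrary (possibly uncountable) linear order, and confirming that the finiteness of the other coordinate is what licenses the key step (the finite minimum $u' = \min_s u_s$ for $\varphi_T$, and the bounded length of an increasing chain of cuts for $\varphi_W$). The falsifiability halves are straightforward given \Cref{LemmaRealToKripke}, so the substance of the proof lies entirely in these two validity-on-restricted-models arguments.
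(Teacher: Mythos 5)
Your proof is correct, but it takes a genuinely different route from the paper's. The paper refutes all three properties at once with the single coimplication-free formula $\ps(p\imp\nx p)$: it exhibits one bi-relational countermodel (with $W=-\omega$ and $T=\omega$) and then argues that any model falsifying this formula must contain an infinite strictly decreasing chain of worlds together with infinitely many distinct times, so $W$ and $T$ are forced to be infinite simultaneously. You instead split the job into two formulas, $\nec\neg\neg p\imp\neg\neg\nec p$ against the temporal property and $\neg\nec\ineg(\nx p\imp p)$ against the order property, and you import the falsifiability halves from real-valued countermodels via Lemma \ref{LemmaRealToKripke} rather than constructing bi-relational countermodels directly. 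Both of your validity-on-restricted-models arguments are sound: the finite-minimum argument over the finite orbit works for finite $T$, and for finite $W$ the key point (which your cut argument captures) is that over a linear $W$ one has $(v,s)\in\val{\ineg(\nx p\imp p)}$ iff some $u\in W$ satisfies $(u,S(s))\in\val p$ and $(u,s)\notin\val p$, so an everywhere-true $\nec\ineg(\nx p\imp p)$ would force an infinite strictly increasing chain of downward-closed subsets of $W$. The trade-offs: the paper's approach is more economical and yields a strictly stronger conclusion, noted explicitly after its proof, namely that the failures hold already for the $\dimp$-free fragment; your refutation of the order property relies essentially on $\dimp$, so your proof as written does not deliver this. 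Your decomposition buys modularity, making visible which finiteness assumption licenses which semantic collapse. Incidentally, your $\varphi_T$ alone would suffice to refute all three properties: when $W$ is finite, its least element can play the role of your $u'$, so $\varphi_T$ also cannot be falsified with $W$ finite; observing this would let you drop $\varphi_W$ entirely and recover the paper's $\dimp$-free strengthening.
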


\begin{proof}
Consider the formula $\ps (p \imp \nx p)$. To see that $\ps (p \imp \allowbreak\nx p)$ is falsifiable, let $-\omega $ denote $\ldots, -2,-1,0 $ with the usual ordering, and take the flow $(\omega, S)$, where $S$ is successor. Consider the model $(-\omega,\omega,{\leq},S, \lb\cdot\rb)$, where $\lb p\rb = \{(-n, m) \in -\omega \times \omega \mid n \geq m\}$. Then at each $(-i, i)$, the formula $p$ holds but $\nx p$ does not. Hence at each $(0, i)$, the formula $p \imp \nx p$ is falsified. Thus $\ps (p \imp \nx p)$ is falsified at $(0, 0)$. (See \Cref{fig:1} (left).)

\begin{figure}[h]

\begin{tikzpicture}[scale = 0.8]
\hspace{-.5cm}
\node[left] at (0,-0) {\phantom{$(w,t)$}};
\foreach \i in {3}
{
\draw (\i,0) -- (\i,-3);
\foreach \j in {\i,...,3}
{
\node[below left] at (\i,-\j) {$p$};
\draw[fill] (\i,-\j) circle (.05);
}
}
\foreach \i in {0, 1, 2}
{
\draw (\i,0) -- (\i,-3);
\foreach \j in {0,...,3}
{
\draw[dashed, ->] (\i,-\j) -- (\i+1, -\j);
}
\foreach \j in {\i,...,3}
{
\node[below left] at (\i,-\j) {$p$};
\draw[fill] (\i,-\j) circle (.05);
}
}
\foreach \i in {0,1,2}
\foreach \j in {0,...,\i}
\draw[fill=white] (\i+1,-\j) circle (.05);

\node at (3.5, -1.5) {\dots};
\node at (1.5, -3.5) {\vdots};
\end{tikzpicture}
\hfill
\begin{tikzpicture}[scale = 0.8]
\hspace{-.3cm}
\foreach \i in {0, 1, 2}
{
\foreach \j in {0,1,2}
{
\draw (\i,-\j+0) -- (\i,-\j-.3);
\draw[decorate, decoration=snake] (\i,-\j-.3) -- (\i,-\j-.7);
\draw (\i,-\j-.7) -- (\i,-\j-1);
}
\foreach \j in {0,...,3}
{
\draw[dashed, ->] (\i,-\j) -- (\i+1, -\j);
}
}
\foreach \i in {3}
{
\foreach \j in {0,1,2}
{
\draw (\i,-\j+0) -- (\i,-\j-.3);
\draw[decorate, decoration=snake] (\i,-\j-.3) -- (\i,-\j-.7);
\draw (\i,-\j-.7) -- (\i,-\j-1);
}
}
\node at (3.5, -1.5) {\dots};
\node at (1.5, -3.5) {\vdots};

\node[left] at (0,-0) {$(w,t)$};
\node[below left] at (0,-1) {$p$};
\draw[fill] (0,-1) circle (.05);
\draw[fill=white] (1,-1) circle (.05);
\node[below left] at (1,-2) {$p$};
\draw[fill] (1,-2) circle (.05);
\draw[fill=white] (2,-2) circle (.05);
\node[below left] at (2,-3) {$p$};
\draw[fill] (2,-3) circle (.05);
\draw[fill=white] (3,-3) circle (.05);
\end{tikzpicture}
\caption{Left: A bi-relational model falsifying $\ps (p \imp \nx p)$; right: $W$ and $T$ are necessarily infinite.}
\label{fig:1}
\end{figure}

\ignore{

\begin{figure}[h]

\begin{minipage}{.235\textwidth}
\centering
\begin{tikzpicture}
\hspace{-.5cm}
\node[left] at (0,-0) {\phantom{$(w,t)$}};
\foreach \i in {0, 1, 2}
{
\draw (\i,0) -- (\i,-3);
\foreach \j in {0,...,3}
{
\draw[dashed, ->] (\i,-\j) -- (\i+1, -\j);
}
\foreach \j in {\i,...,3}
{
\node[below left] at (\i,-\j) {$p$};
\draw[fill] (\i,-\j) circle (.05);
}
}
\foreach \i in {3}
{
\draw (\i,0) -- (\i,-3);
\foreach \j in {\i,...,3}
{
\node[below left] at (\i,-\j) {$p$};
\draw[fill] (\i,-\j) circle (.05);
}
}
\node at (3.5, -1.5) {\dots};
\node at (1.5, -3.5) {\vdots};
\end{tikzpicture}
\vspace{-.5cm}
\caption{Model falsifying $\ps (p \imp \nx p)$}
\label{fig:1}
\end{minipage}
\begin{minipage}{.235\textwidth}
\centering
\begin{tikzpicture}
\hspace{-.3cm}
\foreach \i in {0, 1, 2}
{
\foreach \j in {0,1,2}
{
\draw (\i,-\j+0) -- (\i,-\j-.3);
\draw[decorate, decoration=snake] (\i,-\j-.3) -- (\i,-\j-.7);
\draw (\i,-\j-.7) -- (\i,-\j-1);
}
\foreach \j in {0,...,3}
{
\draw[dashed, ->] (\i,-\j) -- (\i+1, -\j);
}
}
\foreach \i in {3}
{
\foreach \j in {0,1,2}
{
\draw (\i,-\j+0) -- (\i,-\j-.3);
\draw[decorate, decoration=snake] (\i,-\j-.3) -- (\i,-\j-.7);
\draw (\i,-\j-.7) -- (\i,-\j-1);
}
}
\node at (3.5, -1.5) {\dots};
\node at (1.5, -3.5) {\vdots};

\node[left] at (0,-0) {$(w,t)$};
\node[below left] at (0,-1) {$p$};
\draw[fill] (0,-1) circle (.05);
\draw[fill=white] (1,-1) circle (.05);
\node[below left] at (1,-2) {$p$};
\draw[fill] (1,-2) circle (.05);
\draw[fill=white] (2,-2) circle (.05);
\node[below left] at (2,-3) {$p$};
\draw[fill] (2,-3) circle (.05);
\draw[fill=white] (3,-3) circle (.05);
\end{tikzpicture}
\vspace{-.5cm}
\caption{$W$ and $T$ necessarily infinite}
\label{fig:2}
\end{minipage}
\end{figure}
}

To see that $\ps (p \imp \nx p)$ can only be falsified on a model $(W,T,{\leq},S, \lb\cdot\rb)$ for which both $W$ and $T$ are infinite, suppose $\ps (p \imp \nx p)$ is falsified at $(w, t)$. Then there must be a sequence $w\geq w_0 > w_1> \dots$ such that for each $i \in \omega$, the formula $p$ holds on each $(w_i, S^i(t))$ but not on $(w_i, S^{i+1}(t))$. This clearly forces $W$ to be infinite, and by downward closure of $\lb\cdot\rb$, it forces $T$ to be infinite too. (See \Cref{fig:1} (right).)
\end{proof}

The same example as in Proposition \ref{no_finite} shows that under real semantics it is also the case that some formulas can only be falsified on an infinite flow with infinitely many realised truth values, as it forces $V(p,t ) >V(p ,S(t) )$ for all $t$.

Note that we have refuted all these finite model properties without using the $\dimp$ connective, thus in fact proving the stronger result that the finite model properties fail for the $\dimp$-free fragment.

We now introduce the structures we will use to mitigate the failure of these finite model properties.
Given a set $\Sigma\subseteq\lanfull$ that is closed under subformulas, we say that $\Phi\subseteq \Sigma$ is a \define{$\Sigma$-type} if the following occur.
\begin{enumerate}

\item $\bot\not\in \Sigma$.

\item If $\varphi\wedge\psi\in \Sigma$, then $\varphi\wedge\psi\in \Phi$ if and only if $\varphi,\psi\in \Phi$.

\item If $\varphi\vee\psi\in \Sigma$, then $\varphi\vee\psi\in \Phi$ if and only if $\varphi\in\Psi$ or $\psi\in \Phi$.

\item If $\varphi\imp \psi\in \Sigma$, then

\begin{enumerate}

\item  $\varphi\imp \psi\in\Phi$ implies that $\varphi\not \in \Phi$ or $\psi\in\Phi$,

\item $\psi \in \Phi$ implies that $\varphi\imp \psi \in\Phi$.

\end{enumerate}

\item If $\varphi\dimp \psi\in \Sigma$, then

\begin{enumerate}

\item $\varphi\dimp \psi\in\Phi$ implies $\varphi  \in \Phi$, 

\item  $\varphi\in\Phi$ and $\psi \notin \Phi$ implies that $\varphi\dimp \psi \in\Phi$.

\end{enumerate}



\end{enumerate}
The set of $\Sigma$-types will be denoted by $\type\Sigma$.
Often we want $\Sigma$ to be finite, in which case we write $\Sigma\Subset \lanfull$ to indicate that $\Sigma\subseteq \lanfull$ and $\Sigma$
is finite and closed under subformulas.

A partially ordered set $(A,{\leq})$ is \define{locally linear} 
 if it is a disjoint union of linear posets.

\begin{definition}\label{frame}
Let $\Sigma\subseteq\lanfull$ be closed under subformulas.
A \define{$\Sigma$-labelled space} is a triple $\mathcal  W= ( |\mathcal  W|,{\leq}_\mathcal  W ,\ell_\mathcal  W )$, where $( |\mathcal  W| ,{\leq}_\mathcal  W )$ is a locally linear poset and $\ell\colon |\mathcal  W| \to \type\Sigma$ an inversely monotone function (in the sense that $w\leq  v$ implies $\ell_\mathcal  W(w) \supseteq \ell_\mathcal  W(v)$) such that for all $w\in |\mathcal  W|$
\begin{itemize}
\item whenever $\varphi\imp \psi\in \Sigma \setminus \ell_\mathcal  W(w)$, there is $v\leq w$ such that $\varphi\in \ell_\mathcal  W(v)$ and $\psi \not \in \ell_\mathcal  W(v)$;
\item
whenever $\varphi\dimp \psi\in  \ell_\mathcal  W(w)$, there is $v\geq w$ such that $\varphi\in \ell_\mathcal  W(v)$ and $\psi\not\in \ell_\mathcal  W (v)$.
\end{itemize}

The $\Sigma$-labelled space $\mathcal  W$ \define{falsifies} $\varphi\in\mathcal L$ if $\varphi\in \Sigma\setminus \ell_\mathcal  W(w)$ for some $w\in W$.
The \define{height} of $\mathcal W$ is the supremum of all $n$ such that there is a chain $w_1 <_\mathcal  W w_2 <_\mathcal  W \ldots <_\mathcal  W  w_n$.
\end{definition}

If $ \mathcal W$ is a labelled space, elements of $|\mathcal W|$ will sometimes be called \define{worlds.}
When clear from context we will omit subscripts and write, for example,~$\leq$ instead of $\leq_\mathcal  W$.

A \define{convex relation} 
 between posets $(A,{\leq}_A)$ and $(B,{\leq}_B)$ is a binary relation $R\subseteq A\times B$ such that for each $x \in A$ the image set $\{y \in B \mid x \mathrel R y\}$ is convex with respect to $\leq_B$, and for each $y \in B$ the preimage set $\{x \in A \mid x \mathrel R y\}$ is convex with respect to $\leq_A$.
 The relation $R$ is \define{fully confluent} if it validates the four following conditions:
\begin{description}

\item[forth-down]\label{forward-down} if $x \leq _A x' \mathrel R y'$ there is $y$ such that $x \mathrel R y \leq_B y'$,

\item[forth-up]\label{forward-up} if $x' \geq _A x \mathrel R y$ there is $y'$ such that $x' \mathrel R y' \geq_B y$, 

\item[back-down]\label{backward-down} if $x' \mathrel R y' \geq_B y$ there is $x$ such that $x'  \geq _A x \mathrel R y$,

\item[back-up] if $x \mathrel R y \leq_B y'$ there is $x'$ such that $x \leq _A x' \mathrel R y'$.

\end{description}

\begin{definition}\label{compatible}
Let $\Sigma\subseteq\lanfull$ be closed under subformulas. Suppose that $\Phi,\Psi\in\type\Sigma$. The ordered pair $(\Phi,\Psi)$ is \define{sensible} if
\begin{enumerate}
\item for all $\nx\varphi\in \Sigma$, we have $\nx\varphi\in \Phi$ if and only if $ \varphi\in \Psi$,
\item for all $\ps\varphi\in \Sigma$, we have $\ps\varphi\in \Phi$ if and only if $\varphi\in\Phi$ or $\ps\varphi\in \Psi$, 
\item for all $\nec \varphi\in \Sigma$, we have $\nec \varphi\in \Phi$ if and only if $\varphi\in \Phi$ and $\nec \varphi\in \Psi$.
\end{enumerate}
A pair $(w,v)$ of worlds in a labelled space $\mathcal  W$ is \define{sensible} if $(\ell (w),\ell (v))$ is sensible.
A relation
$S\subseteq |\mathcal  W|\times |\mathcal  W|$
is \define{sensible} if every pair in $S$ is sensible.
Further, $S$ is \define{$\omega$-sensible} if it is serial and
\begin{itemize}

\item whenever $\ps\varphi\in \ell(w)$, there are $n\geq 0$ and $v$ such that $w \mathrel S^n v$ and $\varphi\in \ell(v)$; 

\item whenever $\nec\varphi\notin \ell(w)$, there are $n\geq 0$ and $v$ such that $w \mathrel S^n v$ and $\varphi\notin \ell(v)$.

\end{itemize}

A \define{labelled system} is a labelled space $\mathcal  W$ equipped with a fully confluent convex sensible relation $R_\mathcal W\subseteq |\mathcal  W|\times|\mathcal  W|$. If moreover $R_\mathcal W$ is $\omega$-sensible, we say that $\mathcal  W$ is a \define{$\Sigma$-quasimodel.}

\end{definition}

Any bi-relational model can be regarded as a $\Sigma$-quasimodel. If $\mathcal {X} = (W,T,{\leq},\allowbreak S, \val \cdot)$ is a bi-relational model and $x\in W \times T$, we can assign a $\Sigma$-type $\ell_\mathcal  X(x)$ to $x$ given by
$\ell_\mathcal  X(x)=\cbra\psi\in \Sigma \mid x\in \val\psi \cket.$
We also set $R_\mathcal  X=\{((w, t) , (w, S(t))) \mid w \in W, t \in T\}$; it is obvious that $R_\mathcal  X$ is $\omega$-sensible.
Henceforth we will tacitly identify $\mathcal  X$ by its associated $\Sigma$-quasimodel.

\section{From quasimodels to bi-relational models}\label{SecGen}


If $\mathcal  Q$ is a quasimodel whose $\omega$-sensible relation is $R_\mathcal Q$, then we may not view $\mathcal  Q$ directly as a bi-relational model for the primary reason that $R_\mathcal  Q$ is not necessarily a function. However, we can extract bi-relational models from quasimodels via an unwinding construction. More precisely, given $\Sigma \Subset \lanfull$, suppose $\varphi$ is falsified on the $\Sigma$-quasimodel $\mathcal Q$. In this section we show how to obtain from $\mathcal Q$ a bi-relational model $\lm_\varphi$ satisfying $\varphi$. We call the resulting bi-relational model a \emph{limit model} of $\mathcal Q$. This proves $\gtl$ is sound for the class of quasimodels.

 The general idea for determinising $\mathcal Q$ is to consider infinite  paths on $\mathcal  Q$ as points in the limit model. However, we will only select paths $\vec w$ with the property that, if $\ps\varphi$ occurs in $\vec w$, then $\varphi$ must also occur at a later time, with a dual condition for $\nec$. These are the realising paths of $\mathcal  Q$.

\begin{definition}
A \define{path} in a $\Sigma$-quasimodel $\mathcal  Q$ is any  sequence $\<w_i\>_{i < \alpha}$ with $\alpha \leq \omega$ such that $w_i \mathrel R w_{i+1}$ whenever $i+1 < \alpha$. An infinite path
${\vec w}=\< w_i\>_{i<\omega}$
is \define{realising} if for all $i<\omega$
\begin{itemize}
\item
 for all $\ps\psi\in \ell(w_i)$, there exists $j\geq i$ such that $\psi\in \ell  (w_j)$,
\item
 for all $\nec\psi\in \Sigma \setminus \ell(w_i)$, there exists $j\geq i$ such that $\psi\in \Sigma \setminus \ell  (w_j)$.
\end{itemize}
\end{definition}

Denote the set of realising paths by $ \domlm $, and let $\< v_i\>_{i<\omega}\leq \< w_i\>_{i<\omega}$ if and only if $ v_i\leq w_i $ 
 for all $i < \omega$.
The worlds of the limit model will be a linearly ordered (with respect to $\leq$) subset of $\lm$. 
 Let

Given our $\Sigma \Subset \lanfull$ and a formula $\varphi$ falsified in $\mathcal Q$, the limit model $\lm_\varphi$ will be of the form $(W, \omega, {\leq}, S, \allowbreak\val \cdot)$, where $W \subseteq \domlm$, the flow function $S \colon \omega \to \omega$ is successor, and $\val p = \{(\vec w ,i) \mid p \in \ell(w_i)\}$ (extended to compound formulas in accordance with \Cref{DefKSem}). We now describe how to select the linearly ordered subset of realising paths $W$.

\begin{definition}
A \define{finite grid} is a linearly ordered finite set of paths in $\mathcal Q$  of some uniform length $k < \omega$ (with the pointwise ordering). A finite grid $P'$ with paths of length $k'$, \define{extends} a finite grid $P$ with paths of length $k$, if $k' > k$ and $P \subseteq P'|_k$, where $P'|_k$ is the restriction of the paths in $P'$ to their initial $k$-length segments.
\end{definition}

\begin{definition}[defects]
 Let $P$ be a finite grid.
\begin{itemize}
\item
A \define{$\ps$-defect} of $P$ is a pair $((w_i)_{i<k}, \ps\varphi) \in P \times \Sigma$ such that $\ps\varphi \in \ell(w_{k-1})$, but $\varphi \not\in \ell(w_{k-1})$.
\item
A \define{$\nec$-defect} of $P$ is a pair $((w_i)_{i<k}, \nec\varphi) \in P \times \Sigma$ such that $\nec\varphi \not\in \ell(w_{k-1})$, but $\varphi \in \ell(w_{k-1})$.
\item
An \define{$\imp$-defect} of $P$ is a triple $((w_i)_{i<k}, \varphi \imp\psi, j ) \in P \times \Sigma \times k$ such that $\varphi \imp\psi \not\in \ell(w_{j})$, but there is no $(v_i)_{i<k} \leq (w_i)_{i<k}$ also in $P$ such that $\varphi \in \ell(v_{j})$, but $\psi \not\in \ell(v_{j})$.
\item
A \define{$\dimp$-defect} of $P$ is a triple $((w_i)_{i<k}, \varphi \dimp\psi, j ) \in P \times \Sigma \times k$ such that $\varphi \dimp\psi \in \ell(w_{j})$, but there is no $(v_i)_{i<k} \geq (w_i)_{i<k}$ also in $P$ such that $\varphi \in \ell(v_{j})$, but $\psi \not\in \ell(v_{j})$.
\item
A \define{seriality defect} is a path $(w_i)_{i<k} \in P$.
\end{itemize}
\end{definition}

Note that because $\Sigma$ is finite, any finite grid has a finite number of defects. We select the set $W \subseteq \domlm$ as follows. We maintain a first-in-first-out queue $D$ of defects and a finite grid $P$ that is extended each time we process a defect. (We will ensure that the constituents of $D$ continue to be defects of the grid.) Choose some $w \in \domlm$ satisfying $\varphi$. We initialise the grid to the single sequence $(w)$ (of length 1) and add all defects of this grid to $D$.

Now, inductively, consider the defect at the head of the queue $D$. (Because of seriality defects, $D$ can never be empty.) 
\begin{itemize}
\item
If the defect is a $\ps$-defect $((w_i)_{i<k}, \ps\varphi)$ of $P$, because $R=R_\mathcal  Q$ is $\omega$-sensible,  we know that there exist $j> 0$ and $v\in R^{j}(w_{k-1})$ such that $\varphi\in \ell(v)$.  We then define $k'=k+j$ and choose
$w_{k -1}\mathrel R w_{k}\mathrel R \hdots \mathrel R  w_{k'-1}=v.$
By the forth-up confluence property, we can extend every $ (v_i)_{i < k} > (w_i)_{i<k}$ in $P$ to a $k'$-length sequence in a way that preserves the (linear) ordering on sequences; similarly for every $ (v_i)_{i < k} < (w_i)_{i<k}$ in $P$ using the forth-down confluence property. 
\item
If the defect is a $\nec$-defect $((w_i)_{i<k}, \nec\varphi)$, we can find $j> 0$ and $v\in R^{j}(w_{k-1})$ such that $\varphi\not\in \ell(v)$, and proceed as for the $\ps$ case.

\item
If the defect is an $\imp$-defect $((w_i)_{i<k}, \varphi \imp\psi, j )$, choose $v_j < w_j$ such that $\varphi \in \ell(v_j)$ and $\psi \not \in \ell(v_j)$. 
Let $(u_i)_{i<k}$ be the minimum sequence in $P$ with $u_j > v_j$ and $(t_i)_{i<k}$ be the maximum sequence in $P$ with $t_j < v_j$, if it exists.
We will assume that $(t_i)_{i<k}$ is defined, since the case where $v_j$ is not bounded below is similar but simpler.
We will complete the sequence $(v_i)_{i<k}$ so that $(t_i)_{i<k} < (v_i)_{i<k} < (u_i)_{i<k}$. Suppose we have defined $v_{j'}$ for $j\geq j' < k-1$. To define $v_{j'+1}$ choose $y$ with $v_{j'} \mathrel R y$ and $y \leq u_{j'+1}$, which exists by forth-up confluence. If $y \geq t_{j'+1}$ we can set $v_{j'+1} = y$ and we are done. Otherwise, by strong local linearity of $\leq$, we have $y < t_{j'+1}$. In this case, choose $z$ with $v_{j'} \mathrel R z$ and $z \geq t_{j'+1}$, which exists by forth-down confluence. Then $y < t_{j'+1} \leq z$, so as the image set of $v_{j'}$ under $R$ is convex (by convexity of $R$), we have $v_{j'} \mathrel R t_{j'+1}$ and we can set $v_{j'+1} = t_{j'+1}$. In this way we can define $(v_i)_{i<k}$ inductively for all indices greater than $j$. The process for indices less than $j$ is similar, using back-up and back-down confluence and the convexity of preimage sets under $R$. By construction, $(v_i)_{i<k}$ sits strictly between  $(t_i)_{i<k}$ and $(u_i)_{i<k}$.

\item 

The case for a $\dimp$-defect is the same, except that we choose $v_j>w_j$.

\item
For a seriality defect $(w_i)_{i<k}$, we extend $(w_i)_{i<k}$ to $(w_i)_{i<k+1}$,  then extend the other sequences to length $k+1$ using  forth-up/forth-down confluence.
\end{itemize}
We have finished updating our grid, which we temporarily denote $P'$; now we must update $D$ so that it contains defects of $P'$.
First we remove the head of $D$ 
 (the defect we just processed). Then every $((w_i)_{i<k}, \ps\varphi) \in D$ is rewritten as $((w_i)_{i<k'}, \ps\varphi)$, every $((w_i)_{i<k}, \nec\varphi) \in D$ as $((w_i)_{i<k'}, \nec\varphi)$, every  $((w_i)_{i<k}, \varphi \imp\psi, j ) \in D$ as $((w_i)_{i<k'}, \varphi \imp\psi, j )$, and every $(w_i)_{i<k}$ as $(w_i)_{i<k'}$. 
 Next, all elements of $D$ that are not defects of $P'$ are deleted from $D$. Finally, all $\ps$-, $\nec$-, and $\imp$-defects of $P'$ that are not already in $D$, and all seriality defects of $P'$ are added to the back of the queue.


We set $W$ to be the limit of this sequence of finite grids. More precisely, let the sequence of grids be $P_0, P_1,\dots$, containing paths of length $k_0, k_1\dots$, respectively; then $W$ is the set of paths $(w_i)_{i<\omega}$ for which there exists $N$ such that for all $n>N$ the initial segment $(w_i)_{i<k_n}$ is in $P_n$. (This gives us the limit we would expect and forbids `diagonal' sequences whose initial segments appear in $P_0, P_1,\dots$ only infinitely often.)

By construction, $(W, \leq)$ is a linearly ordered set, and for each $p \in \mathbb P$, the set $\lb p \rb$ is downward closed in its first coordinate. Thus the limit model $\lm_\varphi = (W, \omega, \leq, S, \val \cdot)$ indeed defines a bi-relational model. Of course $\lm_\varphi$ is only useful if $\lb\cdot\rb$ `matches' with $\ell$ on all formulas of $\Sigma$, not just propositional variables. Fortunately, this turns out to be the case.

\begin{lem}\label{sound}
Let $\Sigma\Subset\lanfull$, $\mathcal  Q$ be a $\Sigma$-quasimodel, $\varphi, \psi\in \Sigma$, and $\lm_\varphi = (W, \omega, \leq, S, \val \cdot)$ be as described above.
Then
$\lb\psi\rb=\{(\vec w, i) \mid \psi\in \ell(w_i)\}.$
\end{lem}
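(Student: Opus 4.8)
The plan is to prove the identity $\val\psi=\{(\vec w,i)\mid \psi\in\ell(w_i)\}$ by induction on the structure of $\psi$, ranging over $\psi\in\Sigma$ (so $\bot$ never arises, as $\bot\notin\Sigma$). The base case $\psi=p\in\mathbb P$ is exactly the defining clause $\val p=\{(\vec w,i)\mid p\in\ell(w_i)\}$, and the cases $\psi=\chi_1\wedge\chi_2$ and $\psi=\chi_1\vee\chi_2$ follow at once by combining the induction hypothesis with the semantic clauses of \Cref{DefKSem} and the conjunction/disjunction conditions~(2)--(3) of the definition of a $\Sigma$-type (using that $\Sigma$ is subformula-closed, so $\chi_1,\chi_2\in\Sigma$).

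For the temporal connectives I rely on the sensibility of $R_\mathcal Q$ along $\vec w$ (each consecutive pair $(w_i,w_{i+1})$ is sensible) together with the realising property of $\vec w$. For $\nx\chi$, sensibility clause~(1) gives $\nx\chi\in\ell(w_i)$ iff $\chi\in\ell(w_{i+1})$, which matches the semantics after applying the induction hypothesis. For $\ps\chi$ the forward implication ``$\ps\chi\in\ell(w_i)\Rightarrow\exists n\,(\chi\in\ell(w_{i+n}))$'' is precisely the first realising condition, while the converse follows by a downward induction on the index, propagating $\ps\chi$ back along the path via sensibility clause~(2). The case of $\nec\chi$ is dual: the forward direction propagates $\nec\chi$ forward along the path using sensibility clause~(3), and the converse is (contrapositively) the second realising condition, which from $\nec\chi\in\Sigma\setminus\ell(w_i)$ produces some $j\geq i$ with $\chi\notin\ell(w_j)$.

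The substantial cases are $\imp$ and $\dimp$. For $\psi=\chi\imp\theta$, the induction hypothesis reduces the semantic clause to: $(\vec w,i)\in\val{\chi\imp\theta}$ iff for every $\vec v\in W$ with $\vec v\leq\vec w$, $\chi\in\ell(v_i)$ implies $\theta\in\ell(v_i)$. The forward direction (from $\chi\imp\theta\in\ell(w_i)$) uses inverse monotonicity of $\ell$ — from $\vec v\leq\vec w$ we get $v_i\leq w_i$, hence $\chi\imp\theta\in\ell(v_i)$ — followed by clause~(4a) of the $\Sigma$-type definition. The converse is where the real work lies: assuming $\chi\imp\theta\notin\ell(w_i)$, I must exhibit a genuine realising path $\vec v\in W$ with $\vec v\leq\vec w$ witnessing $\chi\in\ell(v_i)$ and $\theta\notin\ell(v_i)$. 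The case of $\dimp$ is strictly dual, using clause~(5) of the $\Sigma$-type definition, the ``upward'' labelled-space condition, and the $\dimp$-defect step with the witness chosen above $\vec w$.

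The main obstacle is exactly this converse for $\imp$ (and dually $\dimp$), and I expect to isolate it as an explicit invariant of the limit grid $W$: for every $\vec w\in W$, every index $i$, and every $\chi\imp\theta\in\Sigma\setminus\ell(w_i)$, there is $\vec v\in W$ with $\vec v\leq\vec w$, $\chi\in\ell(v_i)$, and $\theta\notin\ell(v_i)$. At the level of the quasimodel, the first defining condition of a $\Sigma$-labelled space already supplies a world $u\leq w_i$ with $\chi\in\ell(u)$ and $\theta\notin\ell(u)$; the delicate point is threading such a $u$ into a full path sitting below $\vec w$ that survives every subsequent extension. This is precisely what the $\imp$-defect step accomplishes — building the witnessing sequence $(v_{i})_{i<k}$ through the confluence and convexity properties of $R_\mathcal Q$ — and I would prove the invariant from the fairness of the first-in-first-out queue: every $\imp$-defect appearing at any stage is eventually processed, and the witness then introduced is itself extended with all of its own defects (seriality, $\ps$, $\nec$) processed, so that it limits to a realising path in $W$. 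Establishing that this queue discipline guarantees both the elimination of all defects and the survival of witnesses to the limit is the crux of the argument.
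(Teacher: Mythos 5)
Your proposal follows essentially the same route as the paper's own proof: a structural induction on $\psi$ whose propositional cases are immediate and whose cases for $\nx$, $\ps$, $\nec$, $\imp$, $\dimp$ are discharged by the construction of $W$, namely by the fact that the fair (FIFO) queue eventually eliminates every defect, so that witnessing paths exist in the limit. The paper's proof is in fact only a two-sentence sketch asserting exactly this, so your more detailed case analysis (sensibility and the realising property for the temporal connectives, inverse monotonicity plus the defect-elimination invariant for $\imp$ and $\dimp$) is a faithful filling-in of that sketch rather than a different argument.
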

\begin{proof} The proof goes by standard induction of formulas. The induction steps for $\wedge$, $\vee$, and $\bot$ are immediate. The cases for $\imp,\nx,\ps$, and $\nec$ follow straightforwardly from the construction of $W$, because every defect is eventually eliminated.
\end{proof}

We obtain the main result of this section, which in particular implies that $\gtl$ is sound for the class of quasimodels.

\begin{prop}\label{second}
Let $\Sigma\Subset\lanfull$ and $\mathcal Q$ be any $\Sigma$-quasimodel, and suppose $\varphi$ is falsified on $\mathcal Q$. Then there exists  a bi-relational model $\lm_\varphi$ that falsifies $\varphi$.
\end{prop}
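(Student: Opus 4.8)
The plan is to feed the grid-and-queue construction laid out above with a seed that witnesses the failure of $\varphi$, assemble the resulting limit model, and then read off falsifiability directly from Lemma~\ref{sound}. Since $\varphi$ is falsified on $\mathcal Q$, I would first fix a world $w^\ast \in |\mathcal Q|$ with $\varphi \in \Sigma\setminus\ell(w^\ast)$ and initialise the construction with the length-one grid $\{(w^\ast)\}$. Carrying out the construction yields $\lm_\varphi = (W,\omega,{\leq},S,\val\cdot)$, which---as already observed---is a genuine bi-relational model: $(W,{\leq})$ is linear, $S$ is successor, and each $\val p$ is downward closed in its first coordinate.

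Two things then need checking. First, that $W$ contains a limit path extending the seed: since every grid update only extends existing paths forward (and inserts new paths strictly between old ones), the seed path survives in every $P_n$, so its forward extensions cohere into a path $\vec w \in W$ with $w_0 = w^\ast$. Second, and more substantially, that every path of $W$ is realising, i.e.\ $W \subseteq \domlm$; this is really the content of the construction together with Lemma~\ref{sound}. Here the fairness of the first-in-first-out queue is decisive: each $\ps$- and $\nec$-defect that ever arises is rewritten but kept until it reaches the head of $D$ and is discharged, while the sensibility of $R$ forces any still-unfulfilled $\ps$- or $\nec$-obligation to propagate to the current endpoint and hence reappear as a defect. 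Together these guarantee that along each limit path every such obligation is eventually met.

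With $\lm_\varphi$ in hand as a bona fide bi-relational model whose worlds are realising paths, I would invoke Lemma~\ref{sound} to get $\lb\psi\rb = \{(\vec u,i) \mid \psi\in\ell(u_i)\}$ for every $\psi\in\Sigma$, in particular for $\psi=\varphi$. Evaluating at the seed-extending path $\vec w$ and time $0$, and using $\varphi\notin\ell(w_0)=\ell(w^\ast)$, gives $(\vec w,0)\notin\lb\varphi\rb$. Hence $\lb\varphi\rb\neq W\times\omega$, so $\lm_\varphi$ falsifies $\varphi$, as required. The genuine obstacle lies not in this short final assembly but in the two inputs it leans on---convergence of the construction to realising paths (the queue-fairness argument) and the agreement of $\lb\cdot\rb$ with $\ell$ throughout $\Sigma$ (Lemma~\ref{sound}, whose delicate $\ps$, $\nec$, and $\imp$ cases are exactly those needing ``every defect is eventually eliminated''); granting these, the proposition is immediate.
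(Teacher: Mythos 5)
Your proposal is correct and follows essentially the same route as the paper: seed the grid-and-queue construction of Section~\ref{SecGen} with a world falsifying $\varphi$, observe that the limit $\lm_\varphi$ is a genuine bi-relational model containing a realising path extending the seed, and conclude via Lemma~\ref{sound} that $(\vec w,0)\notin\lb\varphi\rb$. The paper treats the proposition as an immediate consequence of exactly these two ingredients (the construction's fairness and Lemma~\ref{sound}), which you have correctly identified as carrying the real weight of the argument.
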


It is interesting to note that although we assumed in this section that $\Sigma$ was finite, this restriction can be removed. Since $\lanfull$ is countable, for an arbitrary subformula-closed $\Sigma \subseteq \lanfull$, there can only be a countable number of defects in any finite grid. Thus with appropriate scheduling all defects can be eliminated in the limit.


\color{black}

\section{From bi-relational models to finite quasimodels}\label{Sec:quotient}


As we noted earlier, every bi-relational model can be naturally viewed as a quasimodel. However, we wish to show that, given a finite and subformula-closed $\Sigma$, we can from each bi-relational model $\mathcal X$  produce a \emph{finite} quasimodel satisfying exactly the same formulas from $\Sigma$ as $\mathcal X$. In this section we do just this by transforming $\mathcal X$ in two steps. First, we will take a bisimulation quotient to obtain a finite $\Sigma$-labelled space equipped with a fully confluent $\omega$-sensible relation. The second step will be to extend the $\omega$-sensible relation to be convex, yielding a finite quasimodel.

We describe the quotient explicitly, noting afterwards that it is a particular type of bisimulation quotient.

Let $\Sigma$ be a subformula-closed subset of $\lanfull$, and let $\mathcal {X} = (W,T,{\leq},S, \val \cdot)$ be a bi-relational model. For $x\in W \times T$, define $\ell_\mathcal X(x)$ by
$\ell_\mathcal X(x)=\cbra\psi\in \Sigma \mid x\in \val\psi \cket$, and define $L_\mathcal X(x) = \cbra \ell(y) \mid \pi_2(x) = \pi_2(y) \cket$, where $\pi_2 \colon W \times T \to T$ is the projection $(w, t) \mapsto t$. We define the binary relation $\sim$ on $W \times T$ by \[x \sim y \iff (\ell(x), L(x)) = (\ell(y), L(y)).\]
If $\Sigma$ is finite, then clearly $(W \times T) / {\sim}$ is finite.

Note that $\sim$ is the largest relation that is simultaneously a bisimulation with respect to the relations $\leq$ and $\geq$, with $\Sigma$ treated as the set of atomic propositions that bisimilar worlds must agree on.

Now define a partial order $\leq_\mathcal Q$ on the equivalence classes $(W \times T) /{\sim}$ of $\sim$ by
\[[x] \leq_\mathcal Q [y] \iff L(x) = L(y)\text{ and }\ell(x) \supseteq \ell(y),\]
noting that this is well-defined and is indeed a partial order.

Since each set $L(x)$ can be linearly ordered by inclusion and $\ell(x) \in L(x)$, the poset $((W \times T) / {\sim}, \leq_\mathcal Q)$ is a disjoint union of linear orders. By defining $\ell_Q$ by 
$\ell_\mathcal Q([x]) = \ell(x)$
we obtain a $\Sigma$-labelled space $((W \times T) / {\sim}, \leq_\mathcal Q, \ell_\mathcal Q)$; it is not hard to check that this labelling is inversely monotone and that the clauses for $\imp$ and $\dimp$ hold with this labelling.

Now define the binary relation $R_\mathcal Q$ on $(W \times T) / {\sim}$ to be the smallest relation such that $[(w, t)] \mathrel R_\mathcal Q [(w, S(t))]$, for all $(w, t) \in W \times T$. 

\begin{lem}
The relation $R_\mathcal Q$ is fully confluent and $\omega$-sensible.
\end{lem}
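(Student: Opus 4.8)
The plan is to first recast $R_\mathcal{Q}$ concretely and isolate two structural facts, then read off $\omega$-sensibility almost for free and derive full confluence from a single representative-lifting lemma. Since $R_\mathcal{Q}$ is the smallest relation containing all pairs $([(w,t)],[(w,S(t))])$, we have $[a]\mathrel{R_\mathcal{Q}}[b]$ if and only if there is some $(w,t)$ with $(w,t)\sim a$ and $(w,S(t))\sim b$; that is, $R_\mathcal{Q}$ is exactly the image under the quotient map of the graph of $\mathrm{id}_W\times S$. Two facts will do all the work. First, for each $t$ the map $f_t\colon(w,t)\mapsto(w,S(t))$ is an order-isomorphism from the column $W\times\{t\}$ onto the column $W\times\{S(t)\}$ (both linearly ordered by the $W$-coordinate), and in particular a bijection. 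Second, I will prove a \emph{lifting lemma}: if $[a]\leq_\mathcal{Q}[b]$ and $(w,t)\sim a$, then $[b]$ has a representative $(w^*,t)\sim b$ with $w^*\geq w$; symmetrically, if $(w',t)\sim b$ then $[a]$ has a representative $(w^*,t)\sim a$ with $w^*\leq w'$. This holds because $[a]\leq_\mathcal{Q}[b]$ forces $L(a)=L(b)$, so $\ell(b)\in L(a)$ is realised at time $t$, giving some $(w_0,t)\sim b$; comparing $w_0$ with $w$ in the linear order $(W,\leq)$ and using inverse monotonicity of $\ell$ together with $\ell(a)\supseteq\ell(b)$ places a representative on the correct side (in the bad case the two types coincide and $[a]=[b]$).

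For $\omega$-sensibility, seriality is immediate since $(w,t)\mathrel{R_\mathcal{Q}}(w,S(t))$ for any representative $(w,t)$ of a class, so $S$ being total guarantees a successor. For the $\ps$ clause, if $\ps\varphi\in\ell([x])$ I pick a representative $(w,t)$; from $(w,t)\in\val{\ps\varphi}$ the bi-relational semantics gives $n$ with $\varphi\in\ell(w,S^n(t))$, and following the genuine orbit $(w,t),(w,S(t)),\dots,(w,S^n(t))$ gives $[x]\mathrel{R_\mathcal{Q}^n}[(w,S^n(t))]$ with $\varphi$ in its label. The $\nec$ clause is the same with $\nec\varphi\notin\ell([x])$ producing some $(w,S^n(t))\notin\val\varphi$. (Pairwise sensibility, if wanted, is likewise trivial: every $R_\mathcal{Q}$-edge lifts to a genuine $S$-step, whose endpoints satisfy the $\nx,\ps,\nec$ coherence conditions by the semantics.)

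Full confluence then follows by a uniform ``lift and transport'' argument. For the two \textbf{forth} conditions I start from an edge realised by $(w,t),(w,S(t))$ together with a comparison on the source side; I use the lifting lemma to find a representative of the other source class in the same column $t$, on the appropriate side of $(w,t)$, and push it forward along $f_t$, which is order-preserving, to obtain the required successor. For the two \textbf{back} conditions I am given an edge and a comparison on the target side; I lift to a representative of the other target class inside the column $S(t)$ and then pull it back through $f_t^{-1}$ to its unique predecessor in column $t$, which supplies the required source. In every case inverse monotonicity of $\ell$ and the constancy of $L$ along a column convert the $W$-order comparison into the desired $\leq_\mathcal{Q}$-comparison, so the four verifications are the same computation, differing only in the direction of the inequalities.

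The main obstacle is the pair of \textbf{back} conditions: unlike the forth conditions, they require producing an $R_\mathcal{Q}$-\emph{predecessor}, and this is exactly where I must use that $f_t$ is a bijection between columns (a consequence of $S$ being a function and of the order on $W\times T$ being the per-column $W$-order), so that every point of column $S(t)$ has a unique preimage in column $t$. Relatedly, the lifting lemma is the real engine of the argument, and it works only because $\leq_\mathcal{Q}$ was defined to relate classes with identical time-profiles $L$, ensuring that the type one wants to lift is actually realised at the relevant time.
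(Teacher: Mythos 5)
Your proof is correct and follows essentially the same route as the paper: fix a concrete representative $(w,t)$ of an $R_\mathcal Q$-edge, use equality of the $L$-components to find a representative of the compared class in the same column on the correct side of $w$, transport it along $\mathrm{id}_W\times S$ (forwards for the forth conditions, backwards for the back conditions), and obtain $\omega$-sensibility by following a genuine orbit $(w,t),(w,S(t)),\dots,(w,S^n(t))$. Your explicit lifting lemma, including the degenerate case where the candidate representative lands on the wrong side and forces $[a]=[b]$, merely spells out what the paper's terse step ``there is some $v \geq w$ with $[y] = [(v,t)]$'' leaves implicit.
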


\begin{proof}
For confluence, suppose $[(w, t)] \mathrel R_\mathcal Q [(w, S(t))]$. To see that the forth-up condition holds, suppose further that $[(w, t)] \leq_\mathcal Q [y]$. Then as $l(y) \in L(y) = L(x)$ there is some $v \geq w$ with $[y] = [(v, t)]$. Then we have $[(v, t)] \mathrel R_\mathcal Q [(v, S(t))]$ and $[(w, S(t))] \leq_\mathcal Q [(v, S(t))]$, as required for the forth-up condition. The proofs of the remaining three confluence conditions are entirely analogous.

It is clear that $R_\mathcal Q$ is sensible. To see that $R_\mathcal Q$ is $\omega$-sensible, first note that $R_\mathcal Q$ is clearly serial. Next, suppose that $\ps \varphi \in \ell_\mathcal Q([(w, t)])$. Then by our definitions, $(w, t) \in \val {\ps\varphi}$. Thus $(w, S^n(t)) \in \val \varphi$ for some $n > 0$. It follows that $[(w, t)] \mathrel R_\mathcal Q [(w, S(t))] R_\mathcal Q \dots R_\mathcal Q  [(w, S^n(t))]$. Similar reasoning applies when we suppose that $\nec \varphi \not\in \ell_\mathcal Q([(w, t)])$, completing the proof.
\end{proof}

As promised, we now have a $\Sigma$-labelled space equipped with a fully confluent $\omega$-sensible relation. We now transform this labelled space into a $\Sigma$-quasimodel by making the additional relation convex by fiat.

Define $R^+_\mathcal Q$ by $X \mathrel R^+_\mathcal Q Y$ if and only if there exist $X_1 \leq_\mathcal Q X \leq_\mathcal Q X_2$ and $Y_1 \leq_\mathcal Q Y \leq_\mathcal Q Y_2$ such that $X_2 \mathrel R_\mathcal Q Y_1$ and $X_1 \mathrel R_\mathcal Q Y_2$. Now define $\mathcal Q = ((W \times T) / {\sim}, \leq_\mathcal Q, \ell_\mathcal Q, R^+_\mathcal Q)$.

\begin{lem}
The structure $\mathcal Q$ is a $\Sigma$-quasimodel.
\end{lem}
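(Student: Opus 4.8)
The plan is to verify, one by one, the four properties that $R^+_{\mathcal Q}$ must satisfy for $\mathcal Q$ to be a $\Sigma$-quasimodel: seriality together with $\omega$-sensibility, convexity, full confluence, and sensibility. The whole argument rests on the elementary observation that $R_{\mathcal Q} \subseteq R^+_{\mathcal Q}$ (take $X_1 = X_2 = X$ and $Y_1 = Y_2 = Y$ in the definition). Since seriality and the two $\omega$-sensibility clauses are existential statements asserting the existence of $R$-paths that witness $\ps$- and $\nec$-formulas, and since every $R_{\mathcal Q}$-path is an $R^+_{\mathcal Q}$-path, all three transfer immediately from the preceding lemma.

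Next I would dispatch convexity and full confluence, both of which are short. For convexity of images, given $X \mathrel{R^+_{\mathcal Q}} Y$ and $X \mathrel{R^+_{\mathcal Q}} Y'$ with $Y \leq_{\mathcal Q} Z \leq_{\mathcal Q} Y'$, I retain the lower crossing edge $X_2 \mathrel{R_{\mathcal Q}} Y_1$ witnessing $X \mathrel{R^+_{\mathcal Q}} Y$ (so $Y_1 \leq_{\mathcal Q} Z$ and $X \leq_{\mathcal Q} X_2$) and the upper crossing edge $X_1' \mathrel{R_{\mathcal Q}} Y_2'$ witnessing $X \mathrel{R^+_{\mathcal Q}} Y'$ (so $Z \leq_{\mathcal Q} Y_2'$ and $X_1' \leq_{\mathcal Q} X$); these four points witness $X \mathrel{R^+_{\mathcal Q}} Z$. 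Convexity of preimages is dual. Full confluence is equally mechanical: for each of the four conditions I apply the matching confluence condition of $R_{\mathcal Q}$ (available from the preceding lemma) to a single well-chosen crossing edge of the $R^+_{\mathcal Q}$-witness and then reabsorb the resulting $R_{\mathcal Q}$-edge via $R_{\mathcal Q} \subseteq R^+_{\mathcal Q}$. For instance, forth-up for $x \leq_{\mathcal Q} x'$ and $x \mathrel{R^+_{\mathcal Q}} y$ follows by applying forth-up of $R_{\mathcal Q}$ to the edge $x_1 \mathrel{R_{\mathcal Q}} y_2$ with $x_1 \leq_{\mathcal Q} x \leq_{\mathcal Q} x'$; the remaining three conditions use, respectively, $x_2 \mathrel{R_{\mathcal Q}} y_1$, $x_1 \mathrel{R_{\mathcal Q}} y_2$, and $x_2 \mathrel{R_{\mathcal Q}} y_1$.

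The substantive step, which I expect to be the main obstacle, is sensibility: every pair $X \mathrel{R^+_{\mathcal Q}} Y$ must be sensible, even though $Y$ need not be a genuine temporal successor of $X$. Fix crossing witnesses $X_1 \leq_{\mathcal Q} X \leq_{\mathcal Q} X_2$ and $Y_1 \leq_{\mathcal Q} Y \leq_{\mathcal Q} Y_2$ with $X_2 \mathrel{R_{\mathcal Q}} Y_1$ and $X_1 \mathrel{R_{\mathcal Q}} Y_2$, and recall that inverse monotonicity of $\ell$ means truth persists to $\leq_{\mathcal Q}$-smaller worlds. The $\nx$-clause is a biconditional and closes symmetrically: $\nx\varphi \in \ell(X) \subseteq \ell(X_1)$ gives $\varphi \in \ell(Y_2) \subseteq \ell(Y)$ via $X_1 \mathrel{R_{\mathcal Q}} Y_2$, while $\varphi \in \ell(Y) \subseteq \ell(Y_1)$ gives $\nx\varphi \in \ell(X_2) \subseteq \ell(X)$ via $X_2 \mathrel{R_{\mathcal Q}} Y_1$. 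For $\ps$ (and dually $\nec$), three of the four implications close the same way, using additionally that in the quotient each $\ell(Z)$ is the type of a genuine point, so that $\varphi \in \ell(Z)$ implies $\ps\varphi \in \ell(Z)$ (and $\nec\varphi \in \ell(Z)$ implies $\varphi \in \ell(Z)$).

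The one implication that resists this bookkeeping is the forward direction of the $\ps$-clause in the case where the edge $X_1 \mathrel{R_{\mathcal Q}} Y_2$ discharges $\ps\varphi$ through $\varphi \in \ell(X_1)$ rather than through $\ps\varphi \in \ell(Y_2)$: then $\varphi \in \ell(X_1)$ does not yield $\varphi \in \ell(X)$, and $Y$ need not carry $\ps\varphi$ a priori. To close this case I would use the provenance of $\mathcal Q$ from the genuine model $\mathcal X$. Writing $X_1 = [(b,s)]$ and $Y_2 = [(b,s+1)]$ for the representative realising the edge, the equality $L(X) = L(X_1)$ of the common value of $L$ at time $s$ lets me realise $X$ itself at time $s$, say $X = [(w',s)]$; linearity of $W$ together with downward closure forces $b \leq w'$ (otherwise $\ell(X_1) = \ell(X)$ and $X_1 = X$, a trivial case), whence the genuine successors satisfy $Y_2 = [(b,s+1)] \leq_{\mathcal Q} [(w',s+1)] =: Y^*$ and hence $Y \leq_{\mathcal Q} Y_2 \leq_{\mathcal Q} Y^*$. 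Since $(X,Y^*)$ is a genuine $R_{\mathcal Q}$-edge it is sensible, so $\ps\varphi \in \ell(X)$ with $\varphi \notin \ell(X)$ gives $\ps\varphi \in \ell(Y^*)$; as $\ps$-formulas persist to $\leq_{\mathcal Q}$-smaller worlds and $Y \leq_{\mathcal Q} Y^*$, we conclude $\ps\varphi \in \ell(Y)$. The $\nec$-clause is dual, realising $X$ at the time of the upper witness $X_2$, using linearity to place the genuine successor $Y^*$ with $Y^* \leq_{\mathcal Q} Y$, and invoking the upward persistence of the absence of $\nec$-formulas. I expect this last case to demand the most care, since it is precisely where the nonfunctionality introduced by the convexification $R^+_{\mathcal Q}$ must be reconciled with the rigid, functional successor structure inherited from $\mathcal X$.
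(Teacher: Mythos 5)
Your proof is correct, and on full confluence, convexity, $\omega$-sensibility, and the $\nx$-clause of sensibility it matches the paper's proof essentially line by line (same witnesses, same use of $R_{\mathcal Q}\subseteq R^+_{\mathcal Q}$). The genuine divergence is in the $\ps$- and $\nec$-clauses of sensibility. The paper verifies only the $\nx$-clause --- the same six-implication cycle you use --- and asserts that ``the $\ps$ and $\nec$ cases are similar''; you correctly noticed that the naive transfer breaks in exactly one direction per connective: for $\ps$, the sensible edge $X_1 \mathrel R_{\mathcal Q} Y_2$ may discharge $\ps\varphi\in\ell_{\mathcal Q}(X_1)$ through $\varphi\in\ell_{\mathcal Q}(X_1)$, and since labels persist only downward this yields neither $\varphi\in\ell_{\mathcal Q}(X)$ nor $\ps\varphi\in\ell_{\mathcal Q}(Y)$ (dually for $\nec$). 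Your repair --- realising $X$ at the time coordinate of a representative of the crossing edge, using linearity of $W$ and downward closure to place the genuine successor $Y^*$ above $Y$ (below it, in the $\nec$ case), and then invoking sensibility of the genuine edge $(X,Y^*)$ --- is sound, but it leans on the provenance of $\mathcal Q$ as a quotient of the bi-relational model $\mathcal X$. There is a shorter, purely abstract repair, presumably what the paper's ``similar'' intends: apply forth-up confluence of $R_{\mathcal Q}$ to $X_1\leq_{\mathcal Q} X$ and $X_1\mathrel R_{\mathcal Q} Y_2$ to get a genuine edge $X\mathrel R_{\mathcal Q} y'$ with $y'\geq_{\mathcal Q} Y_2\geq_{\mathcal Q} Y$; sensibility of that single edge gives $\varphi\in\ell_{\mathcal Q}(X)$ or $\ps\varphi\in\ell_{\mathcal Q}(y')$, and in the latter case $\ell_{\mathcal Q}(y')\subseteq\ell_{\mathcal Q}(Y)$ closes the clause (dually, forth-down settles $\nec$). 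The abstract version proves the stronger statement that convexification preserves sensibility for \emph{any} fully confluent sensible relation on a labelled space; your version, in exchange, gives a concrete picture of how the convexified edges of the quotient arise from actual successor pairs in $\mathcal X$. Either way, you have filled in a step that the paper dismisses too quickly.
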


\begin{proof}
We already know that $((W \times T) / {\sim}, \leq_\mathcal Q, \ell_\mathcal Q)$ is a $\Sigma$-labelled space. First we must check $R^+_\mathcal Q$ is still fully confluent and $\omega$-sensible. 

For the forth-down condition, suppose $X \leq_\mathcal Q X' \mathrel R^+_\mathcal Q Y'$. Then by the definition of $R^+_\mathcal Q$, there are some $X_2 \geq_\mathcal Q X'$ and $Y_1 \leq_\mathcal Q Y'$ such that $X_2 \mathrel R_\mathcal Q Y_1$. 
 Since $X \leq_\mathcal Q X' \leq_\mathcal Q X_2$, by the forth-down condition for $R_\mathcal Q$ there is some $Y \leq_\mathcal Q Y_1$ with $X \mathrel R_\mathcal Q Y$ and therefore $X \mathrel R^+_\mathcal Q Y$. Since $Y \leq_\mathcal Q Y_1 \leq_\mathcal Q Y'$, we are done. 
 The proof that the forth-up condition holds is just the order dual of that for forth-down. The proofs of the back-down and back-up conditions are similar.


To see that $R^+_\mathcal Q$ is sensible, suppose $ X \mathrel R^+_\mathcal Q Y$ and that $\nx \varphi \in \Sigma$. Take $X_1 \leq_\mathcal Q X \leq_\mathcal Q X_2$ and $Y_1 \leq_\mathcal Q Y \leq_\mathcal Q Y_2$ such that $X \mathrel R_\mathcal Q Y_1$. Then
\begin{align*}
\nx \varphi \in \ell_\mathcal Q(X) &\implies \nx \varphi \in \ell_\mathcal Q(X_1)
&&\implies \phantom{\nx}\varphi \in \ell_\mathcal Q(Y_2)
 &&\implies \phantom{\nx}\varphi \in \ell_\mathcal Q(Y)\\
 &\implies \phantom{\nx}\varphi \in \ell_\mathcal Q(Y_1)
 &&\implies \nx \varphi \in \ell_\mathcal Q(X_2)  &&\implies \nx \varphi \in \ell_\mathcal Q(X),
\end{align*}
so $ \nx \varphi \in \ell_\mathcal Q(X) \iff  \varphi \in \ell_\mathcal Q(Y)$. The $\ps$ and $\nec$ cases are similar. It is now clear that $R^+_\mathcal Q$ is $\omega$-sensible since $R_\mathcal Q \subseteq R^+_\mathcal Q$, and the three conditions for a sensible relation to be $\omega$-sensible are all monotone.

Finally, we show that $R^+_\mathcal Q$ is convex. Firstly, for the image condition, if $X \mathrel R^+_\mathcal Q Y_1$ and $X \mathrel R^+_\mathcal Q Y_2$ with $Y_1 \leq_\mathcal Q Y \leq Y_2$, then by the definition of $R^+_\mathcal Q$ we can find  $X_2 \geq_\mathcal Q X$ and $Y'_1 \leq_\mathcal Q Y_1$ with $X_2 \mathrel R_\mathcal Q Y'_1$, and similarly $X_1 \leq_\mathcal Q X$ and $Y'_2 \geq_\mathcal Q Y_2$ with $X_1 \mathrel R_\mathcal Q Y'_2$. Since then $X_1 \leq_\mathcal Q X \leq_\mathcal Q X_2$ and $Y'_1 \leq_\mathcal Q Y \leq_\mathcal Q Y'_2$, by the definition of $R^+_\mathcal Q$ we conclude that $X \mathrel R^+_\mathcal Q Y$. The preimage condition is completely analogous. This completes the proof that $\mathcal Q$ is a $\Sigma$-quasimodel.
\end{proof}

\begin{lem}\label{falsifies}
Let $\varphi \in \Sigma$. Then $\mathcal X$ falsifies $\varphi$ if and only if $\mathcal Q$ falsifies $\varphi$.
\end{lem}

\begin{proof}
We have: $\mathcal X$ falsifies $\varphi$ if and only if $\exists (w, t) \in (W \times T) \setminus \lb\varphi\rb$ if and only if $\exists [(w, t)] \in (W \times T) / {\sim}$ with $\varphi \in \Sigma \setminus \ell_\mathcal Q([(w, t)])$ if and only if $\mathcal Q$ falsifies $\varphi$.
\end{proof}

In order to use Lemma~\ref{falsifies} to prove decidability, we need to compute a bound on the size of the quasimodel $\mathcal Q$ in terms of the size of $\Sigma$, when $\Sigma$ is finite.

\begin{lem}\label{quasi:bound}
Suppose $\Sigma$ is finite, and write $\lgt\Sigma$ for its cardinality. Then the height of $\mathcal Q$ is bounded by $\lgt\Sigma +1$, and the cardinality of the domain $(W \times T) / {\sim}$ of $\mathcal Q$ is bounded by $(\lgt\Sigma +1)\cdot 2^{\lgt \Sigma (\lgt \Sigma +1)+1}$
\end{lem}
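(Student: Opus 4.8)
The plan is to bound the height and the cardinality separately, in both cases using that every equivalence class $[x]$ is completely determined by the pair $(\ell(x), L(x))$ and that these data are tightly constrained.

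\textbf{Height.} First I would note that a strict $\leq_\mathcal Q$-chain $[x_1] <_\mathcal Q [x_2] <_\mathcal Q \cdots <_\mathcal Q [x_m]$ must have constant second coordinate, $L(x_1) = \cdots = L(x_m)$, since $[x] \leq_\mathcal Q [y]$ forces $L(x) = L(y)$ by the definition of $\leq_\mathcal Q$. Hence the labels must strictly decrease, $\ell(x_1) \supsetneq \ell(x_2) \supsetneq \cdots \supsetneq \ell(x_m)$. As each $\ell(x_i) \subseteq \Sigma$, their cardinalities form a strictly decreasing sequence within $\{0, 1, \dots, \lgt\Sigma\}$, so $m \leq \lgt\Sigma + 1$, giving the claimed height bound at once.

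\textbf{Cardinality.} The crucial observation — the one that keeps the count singly- rather than doubly-exponential — is that $L(x)$ is not an arbitrary subset of $\type\Sigma$ but a \emph{chain} under inclusion. Indeed $L(x)$ consists of the labels $\ell(w, t)$ of all worlds sharing the fixed time coordinate $t = \pi_2(x)$; since $(W, \leq)$ is linearly ordered and $\ell$ is inversely monotone, any two such labels are $\subseteq$-comparable, so $L(x)$ is a chain and therefore has at most $\lgt\Sigma + 1$ members. Such a chain can be encoded injectively, by listing its members in increasing order and padding by repetition, as a sequence of exactly $\lgt\Sigma + 1$ subsets of $\Sigma$; there are at most $(2^{\lgt\Sigma})^{\lgt\Sigma + 1} = 2^{\lgt\Sigma(\lgt\Sigma + 1)}$ of these, and hence at most this many possible values of $L(x)$. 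Once $L(x)$ is fixed, $\ell(x)$ is one of its at most $\lgt\Sigma + 1$ members, so the number of admissible pairs $(\ell(x), L(x))$, and therefore the number of equivalence classes, is at most $(\lgt\Sigma + 1) \cdot 2^{\lgt\Sigma(\lgt\Sigma + 1)}$, comfortably within the stated bound.

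The only genuinely substantive step is recognising that each $L(x)$ is a $\subseteq$-chain; everything else is bookkeeping. The one place to take routine care is pinning down the counting constant: the slightly cruder estimate $\sum_{k \leq \lgt\Sigma + 1} \binom{2^{\lgt\Sigma}}{k}$ on the number of chains also works and still lands within the factor $2^{\lgt\Sigma(\lgt\Sigma + 1) + 1}$ appearing in the statement, so no real difficulty arises there.
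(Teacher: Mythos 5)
Your proof is correct and follows essentially the same route as the paper's: both bound the domain by counting pairs $(\ell, L)$ where $L$ is a $\subseteq$-chain of subsets of $\Sigma$ (hence of length at most $\lgt\Sigma+1$) and $\ell \in L$, with the height bound coming from the fact that $\leq_{\mathcal Q}$-chains fix $L$ and strictly shrink $\ell$. The only difference is bookkeeping: you count chains via an injective padded-sequence encoding, getting $2^{\lgt\Sigma(\lgt\Sigma+1)}$, while the paper sums $(2^{\lgt\Sigma})^i$ over chain sizes $i \leq \lgt\Sigma+1$ and bounds the geometric sum by $2^{\lgt\Sigma(\lgt\Sigma+1)+1}$ --- both land within the stated bound.
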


\begin{proof}
Each element of the domain of $\mathcal Q$ is a pair $(\ell, L)$ where $L$ is a (nonempty) subset of $\raisebox{2pt}{$\wp$} \Sigma$ and $\ell \in L$. Since $L$ is linearly ordered by inclusion, it has height at most $\lgt \Sigma + 1$. There are $(2^{\lgt \Sigma})^i$ subsets of $\raisebox{2pt}{$\wp$} \Sigma$ of size $i$, so there are at most $\sum_{i = 1}^{\lgt \Sigma +1}(2^{\lgt \Sigma})^i$ distinct $L$. The sum is bounded by $2^{\lgt \Sigma (\lgt \Sigma +1)+1}$. The factor of $\lgt\Sigma +1$ corresponds to choice of an $\ell \in L$, for each $L$.
\end{proof}

Thus we have an exponential bound on the size of $\mathcal Q$. Now we can prove that $\gtl_{\mathbb R}$ and $\gtl$ are decidable.

\begin{thm}
The logic $\gtl_{\mathbb R}$ of $\lanfull$-formulas that are valid on all flows and the logic $\gtl$ of $\lanfull$-formulas that are valid on all bi-relational frames are equal and decidable.
\end{thm}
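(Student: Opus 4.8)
The plan is to obtain both assertions by assembling the results already proved. The equality $\gtl_{\mathbb R} = \gtl$ is immediate from Lemmas~\ref{LemmaRealToKripke} and~\ref{LemmaKripkeToReal} by contraposition: the former says that bi-relational validity implies real validity, giving $\gtl \subseteq \gtl_{\mathbb R}$, and the latter gives the reverse inclusion $\gtl_{\mathbb R} \subseteq \gtl$. Hence the two logics coincide, and it suffices to prove that one of them, say $\gtl$, is decidable.

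For decidability, the central step is to reduce falsifiability on an arbitrary bi-relational model to falsifiability on a finite quasimodel of effectively bounded size. Given an input formula $\varphi$, take $\Sigma$ to be the (finite, subformula-closed) set of subformulas of $\varphi$, so that $\varphi \in \Sigma$. By the definition of validity, $\varphi \notin \gtl$ holds exactly when some bi-relational model falsifies $\varphi$. The construction of \Cref{Sec:quotient} together with Lemma~\ref{falsifies} produces, from any such model, a $\Sigma$-quasimodel $\mathcal Q$ that also falsifies $\varphi$, and Lemma~\ref{quasi:bound} bounds the cardinality of its domain by $(\lgt\Sigma + 1)\cdot 2^{\lgt\Sigma(\lgt\Sigma+1)+1}$. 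Conversely, Proposition~\ref{second} shows that if any $\Sigma$-quasimodel falsifies $\varphi$, then some bi-relational model does. Chaining these, I conclude that $\varphi$ is falsifiable on a bi-relational model if and only if it is falsifiable on a finite $\Sigma$-quasimodel respecting the bound of Lemma~\ref{quasi:bound}.

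This equivalence yields a decision procedure. Given $\varphi$, compute $\Sigma$ and the size bound $N$ from Lemma~\ref{quasi:bound}, and enumerate all structures $(|\mathcal W|, {\leq}, \ell, R)$ with $|\mathcal W| \leq N$; there are finitely many, since the domain is finite, $\leq$ ranges over the finitely many partial orders on it, $\ell$ over the finitely many functions into the finite set $\type\Sigma$, and $R$ over the finitely many subsets of $|\mathcal W| \times |\mathcal W|$. For each candidate one effectively checks the $\Sigma$-quasimodel conditions---inverse monotonicity, the $\imp$- and $\dimp$-clauses of a labelled space, convexity and full confluence of $R$, sensibility, and $\omega$-sensibility---all of which are bounded quantifications over the finite domain, with the reachability requirements in $\omega$-sensibility decidable because $R$-reachability on a finite set is computable. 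One then tests whether $\varphi \in \Sigma \setminus \ell(w)$ for some $w$. Reporting that $\varphi$ is falsifiable precisely when some enumerated candidate is a $\Sigma$-quasimodel falsifying $\varphi$ therefore decides falsifiability, and hence validity, of $\varphi$.

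I do not anticipate a genuine obstacle, since the substantive content---soundness and completeness for finite quasimodels and the exponential size bound---is supplied by the earlier results. The only points needing care are fixing $\Sigma$ to be the subformulas of the input so that $\varphi \in \Sigma$, and confirming that every quasimodel condition is effectively checkable on a finite structure; in particular that $\omega$-sensibility, although stated with a quantifier over all $n$, reduces to a finite reachability test on the finite relation $R$.
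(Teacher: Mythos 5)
Your proposal is correct and follows essentially the same route as the paper: equality via the two lemmas of \Cref{sec:real}, and decidability by combining Proposition~\ref{second} with Lemmas~\ref{falsifies} and~\ref{quasi:bound} to reduce falsifiability to a search over $\Sigma$-quasimodels of bounded size. Your extra care in spelling out that each quasimodel condition (including $\omega$-sensibility, via finite reachability) is effectively checkable merely elaborates what the paper dismisses as ``clear.''
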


\begin{proof}
By \Cref{sec:real}, $\gtl_{\mathbb R} = \gtl$. Since falsifiability is the complement of validity, it suffices to show that it is decidable whether a formula $\varphi$ is falsifiable over the class of all bi-relational frames. Let $\Sigma$ be the set of subformulas of $\varphi$. If $\varphi$ is falsifiable in a $\Sigma$-quasimodel of size at most $(\lgt\Sigma +1)\cdot 2^{\lgt \Sigma (\lgt \Sigma +1)+1}$, then by Proposition~\ref{second}, $\varphi$ is falsified in a bi-relational frame. Conversely, if $\varphi$ is falsified in a bi-relational frame, then by Lemma~\ref{falsifies} and Lemma~\ref{quasi:bound}, $\varphi$ is falsified in a $\Sigma$-quasimodel of size at most $(\lgt\Sigma +1)\cdot 2^{\lgt \Sigma (\lgt \Sigma +1)+1}$.
Hence, it suffices to check falsifiability of $\varphi$ on the set of all $\Sigma$-quasimodels of size at most $(\lgt\Sigma +1)\cdot 2^{\lgt \Sigma (\lgt \Sigma +1)+1}$. It is clear that this check can be carried out within a computable time bound; hence the problem is decidable.
\end{proof}

Note that this proof yields only a {\sc nexptime} upper bound.
In the next section, we will see that this can be improved.

\section{PSPACE completeness}\label{Sec:PSPACE}

 We recall that the validity problem for {\sf LTL} is {\sc pspace}-com\-plete~\cite[Theorem 4.1]{SK85}. Thus to prove {\sc pspace}-hardness of the $\gtl$ validity problem, it suffices to give a reduction from {\sf LTL} validity to $\gtl$ validity.
  
 Consider the (negative) translation ``$(\cdot)^{\bullet}$''~\cite{91986055} defined as follows:
 
 \begin{enumerate}
 	\item $p^{\bullet} = \neg \neg p$, with $p$ a propositional variable;
 	\item Homomorphic for the rest of operators
 \end{enumerate}

In what follows we may assume that $T=\mathbb N$, equipped with the standard successor function.
 
 \begin{prop}
 Given any real G\"odel temporal valuation $V$, any formula $\varphi \in \mathcal L$ and any $t\in \mathbb{N}$, we have that $V(\varphi^\bullet,t) \in \lbrace 0,1 \rbrace$.
 \end{prop}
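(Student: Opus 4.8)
The plan is to prove the stronger statement that $V(\varphi^\bullet, t) \in \{0,1\}$ for \emph{all} $t \in \mathbb{N}$ simultaneously, by structural induction on $\varphi$. Formulating the induction hypothesis uniformly over $t$ is essential, since the semantic clauses for $\nx$, $\ps$, and $\nec$ evaluate the immediate subformula at other time points, and we will need two-valuedness at all of those points at once.

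For the base cases: $\bot^\bullet = \bot$ has $V(\bot, t) = 0$, and for a propositional variable $p$ we have $p^\bullet = \neg\neg p$. Unwinding the semantics of $\imp$ and $\bot$ gives $V(\neg p, t) = 1$ if $V(p,t) = 0$ and $V(\neg p, t) = 0$ otherwise; applying negation once more yields $V(\neg\neg p, t) = 1$ when $V(p,t) > 0$ and $V(\neg\neg p, t) = 0$ when $V(p,t) = 0$. In either case $V(p^\bullet, t) \in \{0,1\}$. This is the only place where the specific form of the translation is used.

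For the inductive step, I would exploit that $(\cdot)^\bullet$ is homomorphic on all remaining connectives, so $(\varphi \circ \psi)^\bullet = \varphi^\bullet \circ \psi^\bullet$ for the binary connectives and $(\heartsuit \varphi)^\bullet = \heartsuit \varphi^\bullet$ for the modalities. By the induction hypothesis the translated immediate subformulas are $\{0,1\}$-valued at every $t$, so it remains only to check that each semantic clause of \Cref{DefRSem} sends $\{0,1\}$-valued arguments to $\{0,1\}$. This is a direct inspection: $\min$ and $\max$ preserve $\{0,1\}$ for $\wedge$ and $\vee$; for $\imp$ the value is either $1$ or the (already two-valued) value of the consequent, and for $\dimp$ it is either $0$ or the two-valued value of the antecedent; the clause for $\nx$ merely re-reads a two-valued quantity at $S(t)$; and for $\ps$ and $\nec$ the values $\sup_{n<\omega} V(\cdot, S^n(t))$ and $\inf_{n<\omega} V(\cdot, S^n(t))$ lie in $\{0,1\}$ because the supremum and infimum of any set of numbers drawn from $\{0,1\}$ again belong to $\{0,1\}$.

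The proof has no genuine obstacle; its entire content is the observation that $\{0,1\}$ is closed under all of the G\"odel temporal operations, including the infinitary $\sup$ and $\inf$, together with the base-case computation showing that double negation forces atoms into $\{0,1\}$. This closure property is special to the G\"odel setting and is precisely what makes the purely homomorphic negative translation succeed here, whereas in weaker intuitionistic logics one would need double negations inserted at the connectives as well.
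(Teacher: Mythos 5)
Your proof is correct and takes essentially the same approach as the paper's: structural induction with the hypothesis uniform in $t$, a base case computing that double negation forces atoms into $\{0,1\}$, and the observation that every G\"odel temporal clause---including the infinitary $\sup$ and $\inf$ for $\ps$ and $\nec$---preserves $\{0,1\}$-valuedness. The only difference is presentational: you make explicit the need to quantify the induction hypothesis over all time points, which the paper leaves implicit.
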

 \begin{proof}
 	By structural induction.
 	\begin{description}
 		\item[\sc Case $\varphi\in\mathbb P$:] This case follows from observing that for any $\psi \in \mathcal L$, we have  $V(\neg \psi,t)\allowbreak\in \{0,1\}$ regardless of $V(\psi,t)$, by the definition of negation.
 		\item[\sc Case $\varphi=\psi\odot\theta$,  $\odot\in \{\wedge,\vee\}$:] 
By induction, $V(\psi^\bullet,t), V(\theta^\bullet,t)\allowbreak\in \{0,1\}$; hence their maximum and minimum are also elements of $\{0,1\}$. 		
 			
 		\item[\sc Case $\varphi=\psi\imp\theta$:]
 		 		By definition, $V(\psi^\bullet \imp\theta^\bullet,t)$ is either $1$ or $V(\theta^\bullet,t)$, which by the induction hypothesis is an element of $\{0,1\}$.
 		\item[\sc Case $\varphi=\psi\dimp\theta$:]
 				By definition, $V(\psi^\bullet\dimp\theta^\bullet,t)$ is either $0$ or $V(\psi^\bullet,t)$, which by the induction hypothesis is an element of $\{0,1\}$. 		
 		\item[\sc Case $\varphi=\odot \psi$,  $\odot\in \{\nx,\ps,\nec\}$:] Consider $\varphi=\nec\psi$, as the other cases are similar.
 		By definition we have that $V((\nec \psi)^\bullet, \allowbreak t) = V(\nec (\psi^\bullet), t) = \inf_{n<\omega}  V(\psi^\bullet, S^n(t))$.
 		Since by induction every $V(\psi^\bullet, t') \in \lbrace 0,1 \rbrace$, their infimum belongs to $\lbrace 0,1 \rbrace$ as well.\qed
 	\end{description}
 \end{proof}

 We introduce the following {\sf GTL}/{\sf LTL} model correspondence: given an {\sf LTL} model $(\mathbb{N},S,V)$,
 we associate a crisp G\"odel model $(\mathbb{N},S,V')$ where $V'(p, t) = 1$ if $t \in V(p)$ and $0$ otherwise. We can prove by induction the following result.
 
 \begin{prop}\label{prop1} For any $\varphi \in \mathcal L$ and for all $t \in \mathbb{N}$, 
 	\begin{enumerate}
 		\item if $(\mathbb{N},S,V), t \models \varphi$ then $V'(\varphi^\bullet,t) = 1$;
 		\item if $(\mathbb{N},S,V), t \not \models \varphi$ then $V'(\varphi^\bullet,t) = 0$.
 	\end{enumerate}
 \end{prop}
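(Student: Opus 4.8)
The plan is to prove both statements simultaneously by a single structural induction on $\varphi$, since the two clauses are really the two halves of one biconditional: for the crisp valuation $V'$ we will show that $V'(\varphi^\bullet, t) = 1$ exactly when the underlying {\sf LTL} model satisfies $\varphi$ at $t$, and equals $0$ otherwise. By the preceding proposition we already know $V'(\varphi^\bullet, t) \in \{0,1\}$ for every $\varphi$ and $t$, so the two claims are genuinely complementary and it suffices to track truth-values $1$ against $\models$.

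First I would dispatch the base case $\varphi = p$. Here $\varphi^\bullet = \neg\neg p$, and by the definition of negation in the real semantics we have $V'(\neg\neg p, t) = 1$ iff $V'(p,t) > 0$ iff $V'(p,t) = 1$ (using crispness) iff $t \in V(p)$ iff $(\mathbb N, S, V), t \models p$. Next I would handle the propositional connectives $\wedge, \vee, \imp$ and the temporal operators $\nx, \ps, \nec$, each by invoking the induction hypothesis on the immediate subformulas and then reading off the defining equation of $V'$ from \Cref{DefRSem}. For instance, in the $\imp$ case, if $(\mathbb N, S, V), t \models \psi \imp \theta$ then either $\psi$ fails (so $V'(\psi^\bullet, t) = 0$ by the induction hypothesis, forcing the conditional to value $1$) or $\theta$ holds (so $V'(\theta^\bullet, t) = 1$, and since $V'(\psi^\bullet,t) \le V'(\theta^\bullet,t)$ the implication again values $1$); the converse direction is symmetric. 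The $\nec$ and $\ps$ cases reduce to the observation that an infimum (resp.\ supremum) of a set of $\{0,1\}$-values is $1$ (resp.\ nonzero, hence $1$) precisely when all (resp.\ some) of the values are $1$, which mirrors the universal/existential quantifier over $S^n(t)$ in the {\sf LTL} semantics.

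The only mildly delicate case is $\dimp$, because coimplication has no {\sf LTL} analogue and so $\varphi^\bullet$ for a coimplication is not something whose value corresponds to any standard temporal truth condition. Here I would rely on the fact that the translation is homomorphic and that $V'(\psi^\bullet\dimp\theta^\bullet, t)$ is, by definition, either $0$ or $V'(\psi^\bullet, t) \in \{0,1\}$; one checks directly that this equals $1$ iff $V'(\psi^\bullet,t) > V'(\theta^\bullet,t)$, i.e.\ iff $\psi$ holds and $\theta$ fails at $t$, which is exactly the {\sf LTL} truth condition one would assign to $\psi \dimp \theta$ read as $\psi \wedge \neg \theta$. I expect this $\dimp$ clause to be the main point requiring care, since the reader's intuition from classical {\sf LTL} does not directly cover it; everything else is routine propagation of the induction hypothesis through the crisp semantics. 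I would close by noting that the two enumerated statements follow at once, since in every case the argument establishes the matching of $V'(\varphi^\bullet, t) = 1$ with $\models$ and of $V'(\varphi^\bullet, t) = 0$ with $\not\models$.
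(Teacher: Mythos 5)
Your proof is correct and takes exactly the approach the paper intends: the paper's own proof of this proposition is simply ``By structural induction. Left to the reader,'' and your argument is a careful fleshing-out of that induction, including the right treatment of the base case via crispness and of $\dimp$ under its classical reading as $\psi \wedge \neg\theta$. Nothing to fix.
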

 \begin{proof}By structural induction. Left to the reader.\end{proof}	
 
 Conversely, given a real G\"odel temporal model $(\mathbb{N},S,V')$, we associate the (crisp) model $(\mathbb{N},S,V)$ by fixing $V(p, t) = V'(\neg \neg p, t) \in \lbrace 0,1 \rbrace$. The following result can be easily obtained by structural induction.
 \begin{prop}\label{prop2} For any $\varphi \in \mathcal L$ and any $t \in \mathbb{N}$, we have $V'(\varphi^\bullet,t)=V(\varphi,t)$.\end{prop}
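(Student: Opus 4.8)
The plan is to proceed by a straightforward structural induction on $\varphi$, establishing the equality $V'(\varphi^\bullet,t) = V(\varphi,t)$ for all $t \in \mathbb{N}$ \emph{simultaneously}. Formulating the induction hypothesis uniformly over all time points is the one piece of bookkeeping that matters, since the semantic clauses for the temporal connectives refer to the values of subformulas at other instants $S^n(t)$; once the hypothesis is stated for every $t$, it is available at each such shifted instant.

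For the base cases I would first dispose of $\varphi = \bot$, where $\bot^\bullet = \bot$ and both sides equal $0$. The only genuinely substantive base case is $\varphi = p$ for a propositional variable $p$: here $p^\bullet = \neg\neg p$, and the desired equality $V'(\neg\neg p,t) = V(p,t)$ holds immediately by the very definition of the associated valuation $V$. This is the single step in which the specific choice of $V$ in terms of $V'$ is used; everything else is driven by the translation being homomorphic.

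For the inductive step, the key observation is that $(\cdot)^\bullet$ is homomorphic on every connective other than propositional variables. Hence for each of $\wedge,\vee,\imp,\dimp$, the value $V'(\varphi^\bullet,t)$ is obtained from $V'(\psi^\bullet,t)$ and $V'(\theta^\bullet,t)$ by exactly the same pointwise real-semantic operation that produces $V(\varphi,t)$ from $V(\psi,t)$ and $V(\theta,t)$ (min, max, the G\"odel implication clause, or the coimplication clause, respectively). Applying the induction hypothesis to the immediate subformulas at the point $t$ then yields the equality. For the temporal connectives $\nx,\ps,\nec$ I would again combine homomorphicity with the induction hypothesis, now invoked at the shifted instants: for example $V'((\nx\psi)^\bullet,t) = V'(\psi^\bullet,S(t)) = V(\psi,S(t)) = V(\nx\psi,t)$, while $\sup_{n<\omega} V'(\psi^\bullet,S^n(t)) = \sup_{n<\omega} V(\psi,S^n(t))$ settles $\ps$, and the dual infimum settles $\nec$. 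Since equal sequences have equal suprema and infima, these cases close.

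I do not expect any real obstacle here; the argument is routine, which is presumably why the paper can state it and leave the induction essentially to the reader. The only point requiring a little care is the uniform-in-$t$ phrasing of the induction hypothesis, together with noting that $V$ is well defined as a (crisp) G\"odel valuation because $V'(\neg\neg p,t) \in \{0,1\}$ by the definition of negation.
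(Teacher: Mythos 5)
Your proof is correct and follows essentially the same route as the paper's: structural induction, with the propositional-variable case settled by the definition $V(p,t)=V'(\neg\neg p,t)$ and all remaining cases by the homomorphicity of $(\cdot)^{\bullet}$ together with the identity of the semantic clauses on both sides. The only (harmless) difference is that by reading $V$ as a crisp G\"odel valuation throughout, you never need the paper's additional remark that real semantics coincides with classical truth definitions on values in $\{0,1\}$.
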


 \begin{proof} By structural induction. 
The case $\varphi\in\mathbb P$ follows from $V'(p^\bullet,t)  = V'(\neg \neg p,t)\allowbreak = V(p,t)$ by definition, and other cases follow from $(\cdot)^\bullet$ being homomorphic and the real semantics coinciding with classical truth definitions when values are in $\{0,1\}$. 
 	\ignore{ 
 
 	\begin{description}
 	
		\item[\sc Case $\varphi\in\mathbb P$:]
  Note that  $V(p,t) = V'((p)^\bullet,t)= V'(\neg \neg p,t)$ by definition.
 		
 		\item[\sc Case $\varphi=\psi\odot\theta$, $\odot\in \{\wedge,\vee\}$:] 
 		Consider $\vaprhi=\psi\wedge\theta$.
 	 If $V'((\psi\wedge\theta)^\bullet,t) = V'((\psi)^\bullet\wedge (\theta)^\bullet,t)=1$ then $V'((\psi)^\bullet,t) =1 $ and $V'((\theta)^\bullet,t)=1$. By induction, $V(\psi,t) =1 $ and $V(\theta,t) =1$. By definition, $V(\psi \wedge \theta,t)=1$.
 	 If $V'((\psi\wedge \theta)^\bullet,t) = V'((\psi)^\bullet\wedge (\theta)^\bullet,t)=0$ then either $V'((\psi)^\bullet,t) =0 $ or $V'((\theta)^\bullet,t)=0$. By induction, $V(\psi,t) =0 $ or $V(\theta,t) =0$. By definition, $V(\varphi \wedge \psi,t)=0$.
 		\item[\sc Case $\varphi=\psi\odot\theta$, $\odot\in \{\imp,\dimp\}$:]
Consider the case $\varphi \rightarrow \psi$, if $V'((\varphi \rightarrow \psi)^\bullet ,t) = V'((\varphi)^\bullet \rightarrow (\psi)^\bullet ,t)= 1$, it makes us to consider two sub-cases: if $V'((\varphi)^\bullet,t) \le V'((\psi)^\bullet ,t)$ we get by induction that $V((\varphi)^\bullet,t) \le V((\psi)^\bullet ,t)$ so $V(\varphi \rightarrow \psi,t) = 1$. If $V'((\varphi)^\bullet,t) > V'((\psi)^\bullet ,t)$  then $V'((\psi)^\bullet ,t) = 1$. By induction it we get that $V(\varphi,t) > V(\psi ,t)$ and $V(\psi,t) = 1$ so $V(\varphi \rightarrow \psi ,t) = 1$. If $V'((\varphi \rightarrow \psi)^\bullet ,t) = V'((\varphi)^\bullet \rightarrow (\psi)^\bullet ,t)= 0$, $V'((\varphi)^\bullet,t) > V'((\psi)^\bullet ,t)$  and $V'((\psi)^\bullet ,t) = 0$. By induction it we get that $V(\varphi,t) > V(\psi ,t)$ and $V(\psi,t) = 0$ so $V(\varphi \rightarrow \psi ,t) = 0$.
 		\item For the case $\nec \psi$, if $V'((\nec \psi)^\bullet,t) = V'(\nec (\psi)^\bullet,t) = 1$ then, by the satisfaction relation, for all $t'\ge t$, $V'((\psi)^\bullet,t')= 1$. By induction $V(\psi,t')= 1$ for all $t'>t$. By the satisfaction relation, $V(\nec \varphi,t) = 1$. If $V'((\nec \psi)^\bullet,t) = V'(\nec (\psi)^\bullet,t) = 0$ then, by the satisfaction relation, there exists $t'\ge t$ such that $V'((\psi)^\bullet,t')= 0$. By induction, $V(\psi,t')= 0$. By the satisfaction relation, $V(\nec \varphi,t) = 0$. 
 		\item For the case $\ps \psi$, if $V'((\ps \psi)^\bullet,t) = V'(\ps (\psi)^\bullet,t) = 0$ then, by the satisfaction relation, $V'((\psi)^\bullet,t')= 0$ for all $t'\ge t$. By induction $V(\psi,t')=01$ for all $t'>t$. By the satisfaction relation, $V(\ps \varphi,t) = 0$. If $V'((\ps \psi)^\bullet,t) = V'(\ps (\psi)^\bullet,t) = 1$ then, by the satisfaction relation, there exists $t'\ge t$ such that $V'((\psi)^\bullet,t')= 1$. By induction, $V(\psi,t')= 1$. By the satisfaction relation, $V(\ps \varphi,t) = 1$. 
 	\end{description}
 	}
 \end{proof}

 As a corollary we get the following. 
 
 \begin{corollary} For any $\varphi \in \mathcal L$, we have ${\sf LTL} \models \varphi$ if and only if ${\sf GTL} \models \varphi^\bullet$. \end{corollary}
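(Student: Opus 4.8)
The plan is to prove the corollary as a direct consequence of the three propositions established above, together with the equivalence $\gtl_{\mathbb R} = \gtl$ from Section~\ref{sec:real}. The statement asserts that LTL validity of $\varphi$ is equivalent to GTL validity of its negative translation $\varphi^\bullet$, so I would unfold both sides into their semantic definitions and connect them via the model correspondences.

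**First I would** establish the forward direction. Suppose ${\sf LTL}\models\varphi$; I want $\gtl\models\varphi^\bullet$. Since $\gtl = \gtl_{\mathbb R}$, it suffices to show $\varphi^\bullet$ is real valid, i.e.\ $V'(\varphi^\bullet,t)=1$ for every real valuation $V'$ and every $t\in\mathbb N$ (using the standing assumption $T=\mathbb N$). Given such a $V'$, pass to its associated crisp LTL model $(\mathbb N,S,V)$ defined by $V(p,t)=V'(\neg\neg p,t)$. Because $\varphi$ is LTL-valid, $(\mathbb N,S,V),t\models\varphi$, and then Proposition~\ref{prop2} gives $V'(\varphi^\bullet,t)=V(\varphi,t)=1$. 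Since $V'$ and $t$ were arbitrary, $\varphi^\bullet$ is real valid, hence GTL-valid.

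**For the converse**, suppose $\gtl\models\varphi^\bullet$ but, for contradiction, ${\sf LTL}\not\models\varphi$. Then there is an LTL model $(\mathbb N,S,V)$ and some $t$ with $(\mathbb N,S,V),t\not\models\varphi$. Build the crisp G\"odel model $(\mathbb N,S,V')$ from $V$ as in the correspondence preceding Proposition~\ref{prop1}. By clause~(2) of Proposition~\ref{prop1}, $V'(\varphi^\bullet,t)=0<1$, so $\varphi^\bullet$ is not globally true on this real model, contradicting $\gtl_{\mathbb R}=\gtl\models\varphi^\bullet$. Hence ${\sf LTL}\models\varphi$.

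**The only real subtlety** is bookkeeping around which direction of the model correspondence to invoke, and ensuring the crisp valuations are genuine real valuations so that Definition~\ref{DefRSem} applies; both points are handled by the preparatory propositions, so no genuine obstacle remains. The argument is essentially a two-line chain once the correspondences are in place, and I would present it as such, citing Propositions~\ref{prop1} and~\ref{prop2} and the identity $\gtl_{\mathbb R}=\gtl$ explicitly.
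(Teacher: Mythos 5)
Your proof is correct and takes essentially the same approach as the paper's: both directions pass through the crisp-model correspondences, with Proposition~\ref{prop2} handling the passage from a real G\"odel countermodel for $\varphi^\bullet$ to an {\sf LTL} countermodel for $\varphi$, and Proposition~\ref{prop1}(2) handling the converse; the paper merely phrases both directions contrapositively rather than directly/by contradiction. If anything, your citations are the more precise ones, since the paper's left-to-right direction cites Proposition~\ref{prop1} where Proposition~\ref{prop2} is what is actually being applied.
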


 \begin{proof}
 	For the left-to-right direction, assume by contraposition that ${\sf {\sf GTL}}\not \models \varphi^\bullet$. Therefore there exists a G\"odel temporal model $(\mathbb{N},S,V')$ and $t \in \mathbb{N}$ such that $V'(\varphi^\bullet,t) \not= 1$. By proposition~\ref{prop1} there exists a crisp G\"odel temporal model $(\mathbb{N},S,V)$ such that $V(\varphi,t) \not= 1$. This latter model can be turned into an ${\sf LTL}$ model. Therefore, ${\sf LTL}\not \models \varphi$.
 	
 	Conversely, assume by contraposition that ${\sf LTL}\not \models \varphi$. This means that there exists an {\sf LTL} model $M$ and $t \in \mathbb{N}$ such that $M, t \not \models \varphi$. Then $M$ can be turned into a \emph{crisp} G\"odel temporal model $(\mathbb{N},S,V)$ such that $V(\varphi,t) = 0$. As a consequence ${\sf GTL}\not \models \varphi$.
 \end{proof}		 
 
 For the {\sc pspace}-inclusion, we adapt the proof of {\sf LTL} satisfiability from~\cite{degola16a} to the case of {\sf GTL}.
 Say that an \define{ultimately periodic quasimodel} is a quasimodel $\mathcal Q=(W ,{\leq},\ell,S)$ such that
 there is a flow $(T,f)$ with $T=\{0,\ldots,i+l\}$ with $f(k)=k+1$ for $k<i+l$, $f(i+l) = i$, and a projection function $\pi\colon W\to T$ such that $\pi^{-1}(t)$ is a linear component of $W$ and $w \mathrel S v$ implies that $\pi(v)=f(\pi(w))$.

In other words, $\mathcal Q$ has an underlying flow $T$ consisting of an initial segment followed by a loop, and each $t\in T$ is assigned a linear order $\pi^{-1}(t)$, which we may also write as $W_t$.
Every falsifiable formula is falsifiable in a quasimodel of this form.
  
 \begin{thm}[ultimately periodic quasimodel property]\label{aperiodic} Every falsifiable $\mathcal L$-formula is falsifiable in an ultimately periodic quasimodel of height bounded by $|\Sigma|+1$.
 \end{thm}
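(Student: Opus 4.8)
The plan is to produce the lasso by folding the time axis of a bi-relational countermodel, in the spirit of the classical proof that an {\sf LTL}-satisfiable formula is satisfiable on an ultimately periodic model, while at the same time collapsing each temporal fibre to a finite chain of types so as to control the height. If $\varphi$ is falsifiable then, by the definitions together with \Cref{sec:real}, there is a bi-relational model $\mathcal X=(W,T,{\leq},S,\val\cdot)$ and a point $(w_0,t_0)\notin\val\varphi$. Passing to the subflow generated by $t_0$ and pulling the valuation back along $n\mapsto S^n(t_0)$, I may assume $T=\mathbb N$, that $S$ is the successor, and that $\varphi$ is falsified at $(w_0,0)$; recall also that $(W,{\leq})$ is linear. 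For each $t\in\mathbb N$ let $W_t=\{\ell_\mathcal X(w,t)\mid w\in W\}$ be the set of $\Sigma$-types realised at time $t$, ordered by reverse inclusion and labelled by the identity. Since $\ell_\mathcal X$ is inversely monotone and types are subsets of $\Sigma$, each $W_t$ is a strictly $\supseteq$-decreasing chain of length at most $|\Sigma|+1$, and only finitely many such labelled chains exist.

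I would then fold. Choose a labelled chain $M$ with $W_t=M$ for infinitely many $t$, and for each type $\Phi\in M$ and each $\ps\psi\in\Phi$ pick one world $w_\Phi$ with $\ell_\mathcal X(w_\Phi,t_0)=\Phi$; since $\mathcal X$ is a genuine model, $\psi\in\ell_\mathcal X(w_\Phi,t_0+m)$ for some $m$, and dually for each $\nec\psi\notin\Phi$. As $M$ has finitely many types, each carrying finitely many such (co-)eventualities, I may pick a later occurrence $t_n$ of $M$ so large that all of these obligations are discharged strictly before $t_n$ along the chosen threads. Setting $i=t_0$, I take the lasso flow on $T'=\{0,\dots,t_n-1\}$ given by $f(k)=k+1$ for $k<t_n-1$ and $f(t_n-1)=i$; this is legitimate precisely because $W_{t_n}=M=W_{t_0}$, so a one-step transition leaving time $t_n-1$ may be read as a transition into the fibre $W_i$.

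The quasimodel is the locally linear poset $\bigsqcup_{t<t_n}W_t$ of height at most $|\Sigma|+1$, labelled by types, with relation $R$ generated by the one-step transitions of $\mathcal X$: for $t<t_n-1$ put $\Phi\mathrel R\Psi$ whenever some $w$ has $\ell_\mathcal X(w,t)=\Phi$ and $\ell_\mathcal X(w,t+1)=\Psi$, and on the back edge $W_{t_n-1}\to W_i$ use the transitions from $t_n-1$ to $t_n$. The labelled-space clauses for $\imp$ and $\dimp$ are inherited fibrewise from $\mathcal X$ exactly as in the quotient of \Cref{Sec:quotient}; sensibility of $R$ is immediate from the semantics of the tenses, and the four confluence conditions (including for the back edge, which is of the same form) follow from the linearity of $(W,{\leq})$ just as in the verification there that $R_\mathcal Q$ is fully confluent. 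To recover convexity I replace $R$ by the convex closure $R^+$ of \Cref{Sec:quotient}, which remains sensible, fully confluent and $\omega$-sensible; one checks in addition that $R^+$ respects the projection, i.e.\ $\Phi\mathrel{R^+}\Psi$ still forces $\pi(\Psi)=f(\pi(\Phi))$, so the result is genuinely an ultimately periodic quasimodel. Since $\varphi\notin\ell_\mathcal X(w_0,0)$, the corresponding type in $W_0$ omits $\varphi\in\Sigma$, so $\varphi$ is falsified.

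I expect the crux to be $\omega$-sensibility of the folded relation, because a witness for an eventuality may now have to travel around the loop. The argument I have in mind is the standard propagation trick: if $\ps\psi$ is requested at some type of $W_t$ and is not already discharged before $t_n$, then by the sensibility clause it persists along a witnessing thread up to a type of $W_{t_n}=M$; by the choice of $t_n$ that very eventuality is discharged within $[t_0,t_n)$, so after wrapping once one obtains a genuine $R$-path in the loop reaching $\psi$, and symmetrically for $\nec$. Carrying this out uniformly for all types and all (co-)eventualities, confirming that the back edge satisfies confluence, and verifying that convexification preserves the layering are where essentially all the effort lies; the remaining verifications reduce to the corresponding facts already established in \Cref{Sec:quotient}.
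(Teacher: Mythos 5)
Your proof is correct, but it takes a genuinely different route from the paper's. The paper's proof never touches the bi-relational model directly: it starts from the \emph{finite} quasimodel supplied by Lemma~\ref{quasi:bound}, unwinds it into an infinite sequence of linear fibres $(W_0,{\leq}_0),(W_1,{\leq}_1),\ldots$ connected by sensible relations using a priority method (defect elimination, as in Section~\ref{SecGen}), and only then applies the pigeonhole-and-fold step: some fibre $W_i$ repeats, the period $l$ is chosen so that every $\ps$/$\nec$ defect of $W_i$ is realised before $W_{i+l}$, and the back edge is added, with $\omega$-sensibility verified by the same propagation trick you describe. You instead bypass both Lemma~\ref{quasi:bound} and the priority method: you slice the countermodel by time into type-chain fibres $W_t$, and realisation of eventualities comes for free because the model's own worlds serve as threads---no scheduling is needed. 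The price you pay is that your one-step type-transition relation is not a quasimodel relation off the shelf, so you must re-verify the labelled-space clauses, the four confluence conditions (your appeal to linearity of $(W,{\leq})$ is exactly right, and in fact a short linearity argument shows your relation is already convex, so the $R^+$ closure borrowed from Section~\ref{Sec:quotient} is a harmless safety step), and, as you note, that convexification respects the fibration. Both proofs share the same skeleton---infinitely many linear fibres of height at most $\lgt\Sigma+1$, pigeonhole on repeated fibres, loop length chosen to discharge the repeated fibre's obligations, propagation of undischarged eventualities to the cut point---so the folding step is essentially the paper's; what differs is how the infinite fibre sequence is manufactured, and your version is arguably more self-contained. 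One small notational slip: after assuming $\varphi$ is falsified at time $0$, you reuse $t_0$ for the first occurrence of $M$; these are different times and should get different names.
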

 
 \begin{proof}
 We sketch the construction.
 By Lemma \ref{quasi:bound}, if $\varphi$ is falsifiable, it is falsifiable on some quasimodel $\mathcal Q' = (W',{\leq}'  ,\ell' ,S' )$ of height at most $\lgt \Sigma+1$.
 Choose $w_0 \in W' $ such that $\varphi\notin\ell_0(w_0)$, and let $W_0$ be the linear component of $w_0$ (i.e., $W_0 = \{v\in W \mid v\leq w_0\text{ or }w_0\leq v\}$) and $\leq_0$ be $\leq'$ restricted to $W_0$.
 By a priority method similar to that of Section \ref{SecGen}, we define an infinite sequence $(W_0,{\leq}_0),(W_1,{\leq}_1),\ldots,$ and sensible relations $S_k\subseteq W_k\times W_{k+1}$, such that $\mathcal Q^{\infty} = (W^\infty,
 {\leq}^\infty,\ell^\infty,S^\infty) $ is a quasimodel, where $ W^\infty =\bigsqcup_{k<\omega} W_k$ ($\bigsqcup$ denotes a disjoint union),  $ {\leq ^\infty} = {\bigsqcup_{k<\omega} \leq_k}$, and so on. 

 Note that there are at most $2^{\lgt \Sigma(\lgt\Sigma+1)}$ possible choices of $W_k$, since each $W_k$ consists of at most $\lgt\Sigma+1 $ types, and there are at most $2^{\lgt\Sigma}$ types.
 This in particular implies that some $W_i$ repeats infinitely often.
 Let $l$ be such that $W_{i+l} = W_i$ and every defect of $W_i$ has been realised before $W_{i+l}$; such an $l$ exists because $W_i$ has finitely many $\Diamond$ or $\Box$ defects.
 We define $\mathcal Q=(W,{\leq},\ell,S)$ to be the restriction of $\mathcal Q^\infty$ to $\bigcup_{k=0}^{i+l-1}W_k$, but with $S$ redefined on $W_{i+l-1}$ so that it maps to $W_i$. 

It remains to check that $\mathcal Q$ is a quasimodel.
We only check that it is $\omega$-sensible, as the other properties are easy to check.
Consider the case of $\Diamond \psi\in \ell(w)$ (the case $\Box\psi\in\Sigma\setminus \ell(w)$ is analogous).
Then $w\in W_k$ for some $k$, which means that for some $j$ (namely, $j=i+l-k$), there is $v\in W_i$ such that $w\mathrel S^j v$.
By construction, there are some $k'$ and some $u$ such that $v\mathrel S^{k'} u$ and $\psi\in \ell(u)$.
Then $w\mathrel S^{k+k'}u$ and $\psi\in\ell(u)$, as needed.
 \end{proof}

Ultimately periodic models can be represented using {\em moments.}

\begin{definition}
A \define{$\Sigma$-moment} is a sequence of the form $\mathfrak m = (\mathfrak m_0,\ldots,\mathfrak m_{m} )$, 
 where 
\begin{enumerate}
\item
each $\mathfrak m_i$ is a $\Sigma$-type, 
\item\label{two}
$\mathfrak m_i\supsetneq \mathfrak m_{i+1}$ for $i<n-1$, 
\item\label{three}
 for every $\varphi \imp \psi \in \Sigma\setminus \mathfrak m_{i} $ there is some $j\leq i$ with $\varphi \in \mathfrak m_{j}$ but $\psi \not\in \mathfrak m_{j}$,
 \item\label{four}
 for every $\varphi \dimp \psi \in  \mathfrak m_{i} $ there is some $j\geq i$ with $\varphi \in \mathfrak m_{j}$ but $\psi \not\in \mathfrak m_{j}$.
 \end{enumerate}
 We write $|\mathfrak M|$ for the set $\{\mathfrak m_0,\ldots,\mathfrak m_m\}$.
The set of $\Sigma$-moments is denoted $M_\Sigma$.
\end{definition}



 
We define the labelled space $(\mathfrak m_0,\ldots,\mathfrak m_{m} ) + (\mathfrak n_0,\ldots,\mathfrak n_{n} )$ to be the parallel sum of the two linear posets $(\mathfrak m_0,\ldots,\mathfrak m_{m} )$ and $(\mathfrak n_0,\ldots,\mathfrak n_{n})$ with labelling given by the identity.

\begin{definition}\label{ts}
Say the moment $\mathfrak n$ is a \define{temporal successor} of $\mathfrak m$, denoted $\mathfrak m \mathrel S_\Sigma \mathfrak n$, if there exists a fully confluent convex sensible relation
$R\subseteq |\mathfrak m| \times |\mathfrak n|$ on the labelled space $\mathfrak m + \mathfrak n$.
\end{definition}


\begin{definition}
 We define
$\moments\Sigma=\<M_\Sigma, S_\Sigma\>$.
\end{definition}






Because of condition \eqref{two}, if $\Sigma$ is finite then so is $\moments\Sigma$.

 \begin{definition} A \emph{small falsifiability 
  witness} for an $\mathcal L$-formu\-la $\varphi$ is a finite sequence of moments $\mathfrak{m}^0,\dots, \mathfrak{m}^i, \dots, \mathfrak{m}^{i+l}$ of subsets of $\Sigma$ with a distinguished position $i$ and binary relations $S_j \subseteq |\mathfrak m^j| \times |\mathfrak m^{j+1}|$ for each $j<i+l$ such that 
 	\begin{enumerate}[label=(\Alph*)]

 		\item $\varphi \in \mathfrak{m}_0^0$ and $\mathfrak{m}^i = \mathfrak{m}^{i+l}$,

 		\item $S_j$ is a sensible relation on $\mathfrak{m}^j + \mathfrak{m}^{j+1}$,

 		\item If $\ps \psi \in \mathfrak m^i_j$ then there are $l'<l$ and a sequence $(j_k)_{k\leq l'}$ with $j_0=j$ such that $\mathfrak m^{i+k}_{j_k} \mathrel S_{i+k} \mathfrak m^{i+k+1}_{j_{k+1}} $ if $k<l'$ and $\psi\in \mathfrak m^{i+l'}_{j_{1'}}$,
 		
 		\item If $\nec \psi \in\Sigma\setminus \mathfrak m^i_j$ then there are $l'<l$ and a sequence $(j_k)_{k\leq l'}$ with $j_0=j$ such that $\mathfrak m^{i+k}_{j_k} \mathrel S_{i+k} \mathfrak m^{i+k+1}_{j_{k+1}} $ if $k<l'$ and $\psi\notin \mathfrak m^{i+l'}_{j_{1'}}$.
 				
 	\end{enumerate}
 \end{definition}

As we will see below, Theorem~\ref{aperiodic} implies that if an $\mathcal L$-formula is falsifiable then it has a small falsifiability witness. Moreover, the converse is also true. As a consequence, we obtain an equivalence between the existence of an infinite structure (a model of $\varphi$) and the existence of a finite structure (a small falsifiability witness) for a given $\mathcal L$-formula $\varphi$.
 
 \begin{thm}	\label{witness}
 	An $\mathcal L$-formula is falsifiable if and only if it has a small falsifiability witness.	
 \end{thm}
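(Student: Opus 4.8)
The plan is to prove both directions of the biconditional, using Theorem~\ref{aperiodic} for the forward direction and a construction of an ultimately periodic quasimodel for the converse.

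For the forward direction, suppose $\varphi$ is falsifiable. By Theorem~\ref{aperiodic}, it is falsifiable in an ultimately periodic quasimodel $\mathcal Q = (W, {\leq}, \ell, S)$ of height at most $|\Sigma|+1$, with underlying flow $T = \{0,\ldots,i+l\}$ and projection $\pi$ assigning to each $t$ a linear component $W_t = \pi^{-1}(t)$. The key observation is that each such linear component $W_t$, being a finite linear order of types whose labelling is inversely monotone and satisfies the $\imp$- and $\dimp$-conditions of a labelled space, is exactly (up to the redundancy of repeated types, which I would collapse) a $\Sigma$-moment: conditions~\eqref{two}--\eqref{four} in the definition of moment transcribe directly the inverse-monotonicity and witness conditions on a linear component. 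I would therefore set $\mathfrak m^t$ to be the moment obtained from $W_t$, take the distinguished position to be $i$, and let $S_t$ be the restriction of $S$ to $W_t \times W_{t+1}$. Conditions~(A) and~(B) of a small falsifiability witness then hold immediately ($\varphi\notin\ell(w_0)$ at the $\preceq$-top of $W_0$ gives $\varphi\in\mathfrak m^0_0$, and $\mathfrak m^i=\mathfrak m^{i+l}$ by periodicity), while conditions~(C) and~(D) are exactly a finitary restatement of the $\omega$-sensibility of $S$ restricted to the loop $W_i,\ldots,W_{i+l}$, which holds because in Theorem~\ref{aperiodic} every defect of $W_i$ was chosen to be realised before $W_{i+l}$.

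For the converse, suppose $\varphi$ has a small falsifiability witness $\mathfrak m^0,\ldots,\mathfrak m^{i+l}$ with relations $S_j$. I would build an ultimately periodic quasimodel by letting $W = \bigsqcup_{j=0}^{i+l-1} |\mathfrak m^j|$ (identifying $\mathfrak m^{i+l}$ with $\mathfrak m^i$), with $\leq$ the parallel sum of the linear orders on the moments, labelling $\ell$ by the identity, and the temporal relation $S = \bigcup_{j<i+l} S_j$, redefined on $|\mathfrak m^{i+l-1}|$ to map into $|\mathfrak m^i|$ in accordance with $S_{i+l-1}$ composed around the loop. The poset conditions for a labelled space come for free from the definition of moment, each $S_j$ is fully confluent, convex, and sensible by hypothesis~(B), and seriality follows since every moment is nonempty. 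The crucial point is $\omega$-sensibility: for a $\ps$-defect or $\nec$-defect arising at a world in some $W_k$, I would argue exactly as in the last paragraph of the proof of Theorem~\ref{aperiodic}, first using seriality to travel forward to the loop position $i$, then invoking conditions~(C) and~(D) of the witness to realise the defect within one traversal of the loop. Finally, $\varphi\in\mathfrak m^0_0$ gives a world falsifying $\varphi$, so $\mathcal Q$ falsifies $\varphi$, and by Proposition~\ref{second} $\varphi$ is falsifiable on a bi-relational model.

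The main obstacle I anticipate is the bookkeeping in verifying that a linear component of the quasimodel genuinely \emph{is} a moment and conversely that the parallel-sum construction genuinely yields a labelled space and a fully confluent relation — in particular ensuring that the redefinition of $S$ on the final layer to close the loop preserves full confluence and convexity across the seam where $\mathfrak m^{i+l-1}$ connects back to $\mathfrak m^i$. The confluence conditions are global across an $S_j$ relation, so one must check that gluing $S_{i+l-1}$ to the loop does not introduce forth/back violations; since $\mathfrak m^{i+l}=\mathfrak m^i$ as labelled posets and $S_{i+l-1}$ was already a fully confluent convex sensible relation on $\mathfrak m^{i+l-1}+\mathfrak m^{i+l}$, this should go through, but it is the delicate step. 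The realisation of defects within the loop (conditions (C), (D)) is the conceptual heart and mirrors the defect-scheduling argument already used for Theorem~\ref{aperiodic}.
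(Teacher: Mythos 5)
Your proof follows essentially the same route as the paper's: the forward direction extracts moments from the linear components of the ultimately periodic quasimodel given by Theorem~\ref{aperiodic}, and the converse glues the witness's moments into a looping labelled space, checks that it is a quasimodel, and invokes soundness via Proposition~\ref{second}. The paper leaves every verification as ``easy to check,'' so your extra detail (notably the confluence check across the loop seam and the reading of conditions (C) and (D) as finitary $\omega$-sensibility) simply fills in the steps the paper omits.
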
	
 \begin{proof} 	
 		For the left-to-right direction, assume that the formula $\varphi$ is falsifiable.
 		By Theorem~\ref{aperiodic}, there exists an ultimately periodic quasimodel $\mathcal{Q} = (W,{\leq},\ell,S)$ such that $W=\bigcup_{k<i+l} W_k$ and $\varphi \in \ell(w)$ for some $w\in W_0$.
 		For each $W_k$, let $W_k = \{v^k_0\ldots v^k_{m_k}\}$ in increasing order, and let $\mathfrak m^k = ( \ell (v^k_0) \ldots \ell (v^k_{m_k}) )$ (deleting repeating types if needed).   	
 		It is easy to check that the sequence $\mathfrak{m}^0,\dots,\mathfrak{m}^i, \dots, \mathfrak{m}^{i+l}$ yields a small falsifiability witness.
 		 	
 	Conversely, we will show that if a formula has a small falsifiability witness $\mathfrak{m}^0, \dots, \mathfrak{m}^i, \dots, \mathfrak{m}^{i+l}$ then it is falsifiable.
 	Write $\mathfrak{m}^{k} = (\mathfrak{m}^{k}_0,\ldots,\mathfrak{m}^{k}_{m_k})$ and consider the labelled space $\mathcal{Q} = (W,{\leq}, \ell, S)$, where $W= \{ (\mathfrak m_s^k,k) \mid k<i+l\text{ and } s\leq m_k \}$ and $\leq,\ell,S$ are defined in the obvious way. 
It is not hard to check that $\mathcal{Q} $ is a quasimodel falsifying $\varphi$. Hence by Theorem \ref{sound}, $\varphi$ is falsifiable.
 \end{proof}	
 
 \begin{algorithm}[h!]
 	\caption{{\sf {\sf GTL}} falsifiability algorithm}\label{gtlsat}
 	\begin{algorithmic}
 	\State input $\varphi$
 		\State set $\Sigma$ to be the set of subformulas of $\varphi$
 		\State guess two moments, $\mathfrak{m}$ and $\mathfrak m_f$, of heights $m,n\leq |\Sigma|+1$ such that $\varphi\notin \mathfrak m_m$
 		\While{$\mathfrak m\neq \mathfrak m_f$}
 		\State guess a moment $\mathfrak{n}$ of height at most $|\Sigma|+1$
 		\If{$\mathfrak n$ is a temporal successor of $\mathfrak m$}
 		\State $\mathfrak{m} \leftarrow \mathfrak{n}$
 		\Else
 		\State reject
 		\EndIf
 		\EndWhile
 		\State $\Delta_\ps \leftarrow \lbrace (k,\ps \psi) \mid \ps \psi \in  \mathfrak{m}_k,\; 0 \le k <n \rbrace$
 		\State $\Delta'_{\ps} \leftarrow \lbrace (k,\ps \psi) \mid \psi \in  \mathfrak{m}_k,\; 0 \le k < n \rbrace$
 		\State $\Delta_{\nec} \leftarrow \lbrace (k,\nec \psi) \mid \nec \psi \in \Sigma\setminus {\mathfrak{m}}_k,\; 0 \le k < n \rbrace$
 		\State $\Delta_{\nec}' \leftarrow \lbrace (k,\nec \psi) \mid \psi \in \Sigma\setminus{\mathfrak{m}}_k,\; 0 \le k < n \rbrace $ 			
 		\State $S^* \leftarrow \lbrace (k,k)  \mid 0 \le k < n \rbrace.$ 		 		 		
 		\While{$\mathfrak{m}_f \neq \mathfrak{m}$ or $\Delta_{\ps} \not \subseteq \Delta'_{\ps}$ or $\Delta_{\nec} \not \subseteq \Delta'_{\nec}$} 

 		\State guess a moment $\mathfrak{n}$ of height at most $|\Sigma|+1$.
 		\If {$\mathfrak n$ is a temporal successor of $\mathfrak m$}
 		\State $\mathfrak{m} \leftarrow \mathfrak{n}$
 		\Else
 		\State reject
 		\EndIf

 		\State $S^* = \lbrace (k,z) \mid (k,y)  \in S^* \hbox{ and } (\mathfrak{m}_y,\mathfrak{n}_z) \in R\rbrace$ 			
 		\State $\Delta'_{\ps} \leftarrow \Delta'_{\ps} \cup \lbrace  (k,\ps \psi) \mid (k,x) \in S^* \hbox{ and } \psi \in  \mathfrak{n}_x\rbrace$
 		\State $\Delta'_{\nec} \leftarrow \Delta'_{\nec} \cup \lbrace  (k,\nec \psi) \mid (k,x) \in S^* \hbox{ and } \psi \in \Sigma \setminus {\mathfrak{n}_x}\rbrace$
 		\State $j \leftarrow j + 1$
 		\EndWhile
\State accept
 	\end{algorithmic}
 \end{algorithm}

 \begin{thm}
 	Algorithm~\ref{gtlsat} for falsifiability checking of an $\mathcal L$-formula is correct and works in space that is polynomial in the size of the input formula.
 \end{thm}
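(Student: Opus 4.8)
The plan is to establish correctness and the {\sc pspace} space bound separately. For \textbf{correctness}, I would argue that the algorithm accepts $\varphi$ if and only if $\varphi$ has a small falsifiability witness; combined with Theorem~\ref{witness}, this gives correctness for the falsifiability problem. The algorithm proceeds in two phases. The first {\tt while}-loop nondeterministically guesses a sequence of moments $\mathfrak m^0, \mathfrak m^1, \dots$, at each step verifying the guessed $\mathfrak n$ is a temporal successor (i.e.\ $\mathfrak m \mathrel S_\Sigma \mathfrak n$, witnessing condition (B) of a witness) and stopping once it reaches a previously chosen target moment $\mathfrak m_f$; this locates the start $i$ of the periodic part. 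The second {\tt while}-loop guesses the remaining moments $\mathfrak m^{i+1}, \dots, \mathfrak m^{i+l}$ of the loop, continuing until it returns to $\mathfrak m_f$ \emph{and} all the $\ps$- and $\nec$-obligations recorded in $\Delta_\ps, \Delta_\nec$ have been discharged (recorded in $\Delta'_\ps, \Delta'_\nec$). The key bookkeeping is the set $S^*$, which tracks reachability within the moments along the loop: the invariant to verify is that $(k,x) \in S^*$ holds exactly when the type at position $k$ of $\mathfrak m_f$ is $S$-connected (along the guessed relations) to the type at position $x$ of the current moment. Granting this invariant, the updates to $\Delta'_\ps$ and $\Delta'_\nec$ exactly certify conditions (C) and (D) of a small falsifiability witness, so the loop terminates in an accepting state precisely when a witness has been traced out.

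For the \textbf{space bound}, the crucial observation is that the algorithm never stores the full sequence of moments, only a constant number of moments at a time ($\mathfrak m$, $\mathfrak n$, $\mathfrak m_f$) together with the auxiliary sets $S^*, \Delta_\ps, \Delta'_\ps, \Delta_\nec, \Delta'_\nec$. Each moment has height at most $|\Sigma|+1$ by Theorem~\ref{aperiodic}, and each of its constituent types is a subset of $\Sigma$, so a single moment is stored in $O(|\Sigma|^2)$ space. The sets $\Delta_\ps, \Delta'_\ps, \Delta_\nec, \Delta'_\nec$ are subsets of (positions)$\times$(subformulas of $\Sigma$), hence of size $O(|\Sigma|^2)$, and $S^*$ is a subset of (positions)$\times$(positions), of size $O(|\Sigma|^2)$. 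Testing whether $\mathfrak n$ is a temporal successor of $\mathfrak m$ requires guessing the relation $R \subseteq |\mathfrak m|\times|\mathfrak n|$ witnessing Definition~\ref{ts} and checking full confluence, convexity, and sensibility, all of which are local conditions checkable in polynomial space. Thus the whole computation runs in $\mathrm{poly}(|\Sigma|) = \mathrm{poly}(|\varphi|)$ space. Since the algorithm is nondeterministic and Savitch's theorem gives $\mathsf{NPSPACE} = \mathsf{PSPACE}$, this yields the claimed {\sc pspace} bound. Combined with the {\sc pspace}-hardness from the reduction from {\sf LTL} via the translation $(\cdot)^\bullet$, we conclude that $\gtl$ validity is {\sc pspace}-complete.

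\emph{The main obstacle} I anticipate is verifying the loop invariant governing $S^*$ and its interaction with the defect-tracking sets $\Delta'_\ps, \Delta'_\nec$. One must show that composing the guessed successor relations $R$ at each step correctly propagates $S$-reachability, so that an obligation $\ps\psi$ at position $k$ of $\mathfrak m_f$ is marked as satisfied exactly when some $S$-reachable type in a later moment contains $\psi$---matching condition (C) of the witness definition with $j_0 = k$. A subtle point is that the relation used to define temporal successors is the abstract $R$ of Definition~\ref{ts}, which need not be unique, so the correctness argument must show the nondeterministic guesses can always be made to agree with a genuine witness; this is where soundness (via the limit-model construction of Theorem~\ref{sound}/Proposition~\ref{second}) and completeness (Theorem~\ref{aperiodic}) are both invoked. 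The remaining verifications---that each guess and each set-update stays within the space budget---are routine.
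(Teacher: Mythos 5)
Your proposal is correct and follows essentially the same route as the paper's proof: correctness is reduced via Theorem~\ref{witness} to showing that accepting computations of Algorithm~\ref{gtlsat} correspond exactly to small falsifiability witnesses, and the space bound follows by observing that only a constant number of moments plus the sets $S^*$, $\Delta_\ps$, $\Delta'_\ps$, $\Delta_\nec$, $\Delta'_\nec$ (each of size polynomial in $|\Sigma|$) are stored, with Savitch's theorem closing the gap between nondeterministic and deterministic polynomial space. In fact you spell out the $S^*$ reachability invariant that the paper dismisses with ``it is not hard to check,'' so your write-up is, if anything, more detailed than the original.
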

 \begin{proof} Completeness follows from Theorem~\ref{witness}.
 If $\varphi$ is falsifiable, then $\varphi$ has a small falsifiability witness $\mathfrak m_0,\ldots,\mathfrak m_{i+l}$ with $\mathfrak m_{i+l} = \mathfrak m_i$.
 We initialise $\mathfrak m$ to $\mathfrak m_0$, $\mathfrak m_f$ to $\mathfrak m_i$, and at step $j$ choose $\mathfrak n$ to be $\mathfrak m_{j+1}$.
 This yields an accepting computation of Algorithm~\ref{gtlsat}.
 
 Conversely, if Algorithm~\ref{gtlsat} has an accepting computation, let $\mathfrak m_0,\ldots,\mathfrak m_{i+l}$ enumerate the values taken by $\mathfrak m$, where $i$ is the least index such that $\mathfrak m_i=\mathfrak m_f$.
 It is not hard to check that this sequence yields a small falsifiability witness.

 In order to check that the nondeterministic algorithm uses polynomial space, it is sufficient to observe that each subset of $\Sigma$ can be encoded by a polynomial number of bits.
 Since $\mathfrak{m}$, $\mathfrak{m}_f$ and $\mathfrak{n}$ have at most length $\lvert \Sigma \rvert$ we need $3 \lvert \Sigma \rvert + 4$ of those sets ($\mathfrak{m}$, $\mathfrak{m}_f$, $\mathfrak{n}$, $\Delta_\ps$, $\Delta'_\ps$, $\Delta_\nec$, and $\Delta'_\nec$).
Checking that $\mathfrak n$ is a temporal successor of $\mathfrak m$ can be done nondeterministically by guessing a relation $R$ and checking that it is a sensible relation; but the size of $R$ is bounded by the product of the sizes of $\mathfrak m$ and $\mathfrak n$.
Similarly,  $\lvert S^* \rvert$ has at most $|\Sigma|^2$ elements so also polynomial.
 	 \end{proof}
 	 
 	  	According to Savitch's theorem~\cite{SAVITCH1970177}, nondeterministic polynomial space is included in deterministic polynomial space.
 	Applying this to falsifiability checking yields the following complexity upper bound.

 \begin{corollary}
 	The decision problem of testing falsifiability for ${\sf GTL}$ is {\sc {\sc pspace}}-complete.
 \end{corollary}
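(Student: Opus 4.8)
The plan is to assemble the two halves of {\sc pspace}-completeness from facts already established. For membership, I would invoke the theorem just proved, namely that Algorithm~\ref{gtlsat} decides {\sf GTL} falsifiability and runs in polynomial space. That algorithm is nondeterministic: it repeatedly guesses moments and checks the temporal-successor relation, while maintaining only a bounded collection of objects ($\mathfrak m$, $\mathfrak m_f$, $\mathfrak n$, the defect-tracking sets $\Delta_\ps,\Delta'_\ps,\Delta_\nec,\Delta'_\nec$, and $S^*$), each encodable in $O(|\Sigma|^2)$ bits. Hence it witnesses that falsifiability is in {\sc npspace}. Applying Savitch's theorem, ${\sc npspace} = {\sc pspace}$, so {\sf GTL} falsifiability lies in {\sc pspace}.

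For the lower bound, I would use the negative translation $(\cdot)^\bullet$ together with the earlier corollary establishing that ${\sf LTL} \models \varphi$ if and only if ${\sf GTL} \models \varphi^\bullet$. Passing to complements, $\varphi$ is {\sf LTL}-falsifiable if and only if $\varphi^\bullet$ is {\sf GTL}-falsifiable, so the map $\varphi \mapsto \varphi^\bullet$ is a many-one reduction from {\sf LTL} falsifiability to {\sf GTL} falsifiability. This translation is computable in logarithmic space, since it merely replaces each propositional variable $p$ by $\neg\neg p$ and is homomorphic on all other connectives; it is therefore a legitimate reduction at the {\sc pspace} level. Since {\sf LTL} validity is {\sc pspace}-complete~\cite{SK85} and {\sc pspace} is closed under complement, {\sf LTL} falsifiability is itself {\sc pspace}-hard, and the reduction transfers this hardness to {\sf GTL} falsifiability.

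Combining the two directions, {\sf GTL} falsifiability is simultaneously in {\sc pspace} and {\sc pspace}-hard, hence {\sc pspace}-complete, as required.

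I do not expect a genuine obstacle here: all of the substantive work has already been carried out, in the correctness-and-space analysis of Algorithm~\ref{gtlsat} and in the soundness and completeness of the negative translation. The one point deserving a moment's care is that the {\sf LTL} hardness was phrased for \emph{validity}, whereas the corollary concerns \emph{falsifiability}; so I must explicitly pass to complements, using ${\sc pspace} = \mathrm{co}\text{-}{\sc pspace}$ (equivalently, observe directly that the translation reduces {\sf LTL} falsifiability to {\sf GTL} falsifiability). With that observation in place, the remainder is bookkeeping.
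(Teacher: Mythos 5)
Your proposal is correct and follows essentially the same route as the paper: membership via the correctness and polynomial-space analysis of Algorithm~\ref{gtlsat} combined with Savitch's theorem, and hardness via the negative translation $(\cdot)^\bullet$ reducing {\sf LTL} to {\sf GTL}. Your explicit passage to complements (from validity to falsifiability, using closure of {\sc pspace} under complement) is a detail the paper leaves implicit, but it is the same argument.
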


 \section{Concluding remarks}
 
 We have defined a natural version of linear temporal logic based on a G\"odel--Dummett base and shown that it may equivalently be characterized as a fuzzy logic or as a superintuitionistic logic using standard semantics in each case.
 Despite the lack of a finite model property for either of the two semantics, we have introduced a class of quasimodels for which $\sf GTL$ {\em does} satisfy a version of the finite model property, and moreover shown how they can be used to adapt the classical proof of {\sc {\sc pspace}}-completeness for $\sf {\sf LTL}$.

This puts G\"odel temporal logics in sharp contrast to other fuzzy logics, whose transitive modal logics are undecidable~\cite{Vidal21}, or intuitionistic temporal logics, where systems are known to be decidable only with non-elementary upper bounds, if at all \cite{F-D18,BalbianiToCL}.
This places G\"odel--Dummett logic as {\em the} premier base for computational applications of sub-classical modal and temporal logics.

The techniques we have used are quite robust and should readily generalise to logics such as $\sf PDL$ or even the G\"odel $\mu$-calculus.
This represents a milestone in the program pioneered by Caicedo et al.~\cite{CaicedoMRR17} of extending complexity results from classical modal and temporal logics to their G\"odel counterparts.


%
%
%
%

\begin{thebibliography}{10}

\bibitem{taspa}
Felicidad Aguado, Pedro Cabalar, Mart{\'{\i}}n Di{\'{e}}guez, Gilberto
  P{\'{e}}rez, Torsten Schaub, Anna Schuhmann, and Concepci{\'{o}}n Vidal.
\newblock Linear-time temporal answer set programming.
\newblock {\em Theory and Practice of Logic Programming}, 2022.
\newblock to appear.

\bibitem{taspb}
Felicidad Aguado, Pedro Cabalar, Mart{\'{\i}}n Di{\'{e}}guez, Gilberto
  P{\'{e}}rez, and Concepci{\'{o}}n Vidal.
\newblock Temporal equilibrium logic: a survey.
\newblock {\em Journal of Applied Non-Classical Logics}, 23(1-2):2--24, 2013.

\bibitem{BalbianiDieguezJelia}
Philippe Balbiani and Mart\'in Di\'eguez.
\newblock Temporal here and there.
\newblock In M.~Loizos and A.~Kakas, editors, {\em Logics in Artificial
  Intelligence}, pages 81--96. Springer, 2016.

\bibitem{Vidal21}
Amanda Vidal.
\newblock On transitive modal many-valued logics.
\newblock {\em Fuzzy Sets Syst.}, 407:97--114, 2021.

\bibitem{BalbianiDF21}
Philippe Balbiani, Mart{\'{\i}}n Di{\'{e}}guez, and David
  Fern{\'{a}}ndez{-}Duque.
\newblock Some constructive variants of {S4} with the finite model property.
\newblock In {\em 36th Annual {ACM/IEEE} Symposium on Logic in Computer
  Science, {LICS} 2021, Rome, Italy, June 29 - July 2, 2021}, pages 1--13.
  {IEEE}, 2021.

\bibitem{CecyliaRauszer1980}
Cecylia Rauszer.
\newblock {\em An algebraic and Kripke-style approach to a certain extension of
  intuitionistic logic}.
\newblock Instytut Matematyczny Polskiej Akademi Nauk, Warsaw, 1980.

\bibitem{BilkovaFK21}
Marta B{\'{\i}}lkov{\'{a}}, Sabine Frittella, and Daniil Kozhemiachenko.
\newblock Constraint tableaux for two-dimensional fuzzy logics.
\newblock In Anupam Das and Sara Negri, editors, {\em Automated Reasoning with
  Analytic Tableaux and Related Methods - 30th International Conference,
  {TABLEAUX} 2021, Birmingham, UK, September 6-9, 2021, Proceedings}, volume
  12842 of {\em Lecture Notes in Computer Science}, pages 20--37. Springer,
  2021.

\bibitem{Fernandez09}
David Fern{\'{a}}ndez{-}Duque.
\newblock Non-deterministic semantics for dynamic topological logic.
\newblock {\em Ann. Pure Appl. Log.}, 157(2-3):110--121, 2009.

\bibitem{dtlaxiom}
David Fern\'{a}ndez-Duque.
\newblock A sound and complete axiomatization for dynamic topological logic.
\newblock {\em Journal of Symbolic Logic}, 77(3):947--969, 2012.

\bibitem{F-D18}
David Fern{\'{a}}ndez{-}Duque.
\newblock The intuitionistic temporal logic of dynamical systems.
\newblock {\em Log. Methods Comput. Sci.}, 14(3), 2018.

\bibitem{Wolter1998}
Frank Wolter.
\newblock On logics with coimplication.
\newblock {\em Journal of Philosophical Logic}, 27(4):353--387, Aug 1998.

\bibitem{BPZ07}
Matthias Baaz, Norbert Preining, and Richard Zach.
\newblock First order g{\"o}del logics.
\newblock {\em Ann. Pure Appl. Logic}, 147:23--47, 2007.

\bibitem{SK85}
A.~P. Sistla and E.~M. Clarke.
\newblock The complexity of propositional linear temporal logics.
\newblock {\em J. ACM}, 32(3):733–749, July 1985.

\bibitem{91986055}
Anne~S. Troelstra and Dirk van Dalen.
\newblock {\em Constructivism in Mathematics: An Introduction Volume 1}.
\newblock Studies in Logic and the Foundations of Mathematics 121.
  North-Holland, 1988.

\bibitem{degola16a}
St\'ephane Demri, Valentin Goranko, and Martin Lange.
\newblock {\em Temporal Logics in Computer Science: Finite-State Systems}.
\newblock Cambridge Tracts in Theoretical Computer Science. Cambridge
  University Press, 2016.

\bibitem{SAVITCH1970177}
Walter~J. Savitch.
\newblock Relationships between nondeterministic and deterministic tape
  complexities.
\newblock {\em Journal of Computer and System Sciences}, 4(2):177--192, 1970.

\bibitem{BalbianiToCL}
Philippe Balbiani, Joseph Boudou, Mart{\'{\i}}n Di{\'{e}}guez, and David
  Fern{\'{a}}ndez{-}Duque.
\newblock Intuitionistic linear temporal logics.
\newblock {\em {ACM} Trans. Comput. Log.}, 21(2):14:1--14:32, 2020.

\bibitem{CaicedoMRR17}
Xavier Caicedo, George Metcalfe, Ricardo~Oscar Rodr{\'{\i}}guez, and Jonas
  Rogger.
\newblock Decidability of order-based modal logics.
\newblock {\em J. Comput. Syst. Sci.}, 88:53--74, 2017.

\end{thebibliography}

\end{document}